\documentclass[notitlepage,aip,reprint,onecolumn]{revtex4-1}

\usepackage{amsmath,amssymb,amsfonts,amsthm,mathtools,bm}
\usepackage{bm}
\usepackage{array}
\DeclareMathSymbol{\hbar}{\mathord}{AMSb}{"7E}
\usepackage{mathptmx}
\usepackage{braket}
\usepackage{mathrsfs}
\usepackage{orcidlink}
\usepackage{lmodern}
\usepackage{microtype}
\usepackage{physics}
\usepackage{enumitem}
\usepackage{array}
\usepackage{longtable}
\usepackage{xcolor}
\usepackage{hyperref}

\newtheorem{theorem}{Teorema}[section]
\newtheorem{proposition}[theorem]{Proposition}
\newtheorem{lemma}[theorem]{Lemma}

\theoremstyle{definition}
\newtheorem{definition}[theorem]{Definition}
\newtheorem{remark}[theorem]{Remark}

\newcommand{\HH}{\mathcal H}
\newcommand{\BH}{\mathcal B(\HH)}

\newcommand{\HS}{\mathrm{HS}}
\renewcommand{\Tr}{\operatorname{Tr}}

\newcommand{\cD}{\mathcal D}

\newcommand{\braketHS}[2]{\left\langle #1,#2\right\rangle_{\HS}}
\renewcommand{\comm}[2]{\left[#1,#2\right]}
\newcommand{\acom}[2]{\left\{#1,#2\right\}}
\newcommand{\circJ}{\mathbin{\circ}}

\usepackage{graphicx}

\begin{document}


\title{A Lie-Jordan Geometric Formulation of Lindblad Dynamics}

\author{Leonel Bixano \orcidlink{0009-0009-5858-6466}}
\email{leonel.delacruz@cinvestav.mx}
\affiliation{Departamento de F\'{\i}sica,
Centro de Investigaci\'on y de Estudios Avanzados del
Instituto Politécnico Nacional,
Av. Instituto Politécnico Nacional 2508,
San Pedro Zacatenco,  CDMX, M\'exico 07360.}
\author{Victor Alberto Cruz-Barriguete\orcidlink{0000-0002-3321-924X}}
\author{Guillermo López-Alvarez\orcidlink{0009-0006-3646-1505}}
\author{V. G. Ibarra-Sierra\orcidlink{0000-0003-2416-3261}}
\author{José Luis Cardoso\orcidlink{0000-0003-3060-1168}}
\author{Juan Carlos Sandoval-Santana\orcidlink{0000-0002-1166-4529}}
\author{Alejandro Kunold\orcidlink{0000-0003-1545-8017}}
\email{akb@azc.uam.mx}
\affiliation{\'Area de F\'isica Te\'orica y Materia Condensada,
Universidad Aut\'onoma Metropolitana,
Azcapotzalco, Av. San Pablo Xalpa 180,
CDMX, 02200, M\'exico}

\keywords{Open quantum systems, Lindblad equation, Lie-Jordan algebras, Hilbert-Schmidt geometry, operator-space geometry, dissipative quantum dynamics, quantum dynamical semigroups}

\begin{abstract}
We develop a Lie-Jordan geometric formulation of finite-dimensional open quantum dynamics, building on the algebraic framework introduced in our previous work (arXiv:2606.26477). The Hilbert-Schmidt operator space is endowed with an orthonormal Hermitian basis, in which the commutator and anticommutator are encoded by the structure tensors \(C_{\mu\nu}{}^\lambda\) and \(B_{\mu\nu}{}^\lambda\). Within this formulation, the von Neumann and Gorini-Kossakowski-Lindblad-Sudarshan equations admit a direct component representation.
Our central result is the identification of a basis-independent universal trilinear dissipative map,
\(
\mathcal D(X,Y)Z=XZY-\frac12\{YX,Z\},
\)
whose components define a universal operator-space tensor depending only on the Lie-Jordan structure tensors. The physical dissipator is obtained by contracting this tensor with the expansion coefficients of the Lindblad operators and the Kossakowski matrix, thereby separating the universal algebraic structure from the model-dependent physical information.
We further show that the combinations \((B+C)\) and \((B-C)\) generate internal left and right transports, allowing the elementary dissipative map to be expressed as a left-right bimodule action corrected by an ordered Jordan contribution. The universal map satisfies basis-independent trace and Hermitian-conjugation identities, from which trace preservation, Hermiticity preservation of the complete dissipator, and the reality of its component representation in a Hermitian Hilbert-Schmidt basis follow. We also derive its Hilbert-Schmidt adjoint and illustrate the formalism for a qubit with pure dephasing and amplitude-damping channels. This construction provides a tensorial and affine-geometric interpretation of the universal superoperator structure derived in arXiv:2606.26477 while keeping the algebraic and model-dependent sectors explicitly separated.
\end{abstract}

\maketitle

\section{Introduction}

The geometric formulation of quantum mechanics has emerged as a powerful framework for analyzing the structure of both closed and open quantum systems. For closed systems, the time evolution of the density operator is described by the von Neumann equation, whose form is dictated by unitary dynamics and whose generator is given by the commutator with the Hamiltonian \cite{vonNeumann1971}.

From a geometric perspective, this evolution can be viewed as a Hamiltonian flow on the manifold of quantum states: the eigenvalue spectrum of the density operator is conserved, the trajectory remains on the associated unitary orbit, and the commutator encodes the relevant Poisson-like structure. This standpoint is part of the contemporary geometric approach to quantum mechanics, in which states, observables, Poisson tensors, Riemannian tensors, and stratified state spaces are regarded as intrinsic geometric entities rather than purely algebraic constructs \cite{Ashtekar:1997ud,Grabowski:2005my,BengtssonZyczkowski2017}.

For open quantum systems, the scenario is more subtle. Assuming Markovianity, complete positivity, and trace preservation, the generator of the time evolution is given by the Gorini-Kossakowski-Lindblad-Sudarshan (GKLS) equation \cite{Kossakowski:1972sbn,Gorini:1975nb,Lindblad:1975ef}. In finite dimensions, this equation can be expressed as
\begin{equation}\label{eq:intro-GKLS}
    \dot\rho
    =
    -\frac{i}{\hbar}[H,\rho]
    +
    \sum_{\alpha,\beta}
    \gamma^{\alpha\beta}
    \left(
    L_\alpha\rho L_\beta^\dagger
    -\frac12\{L_\beta^\dagger L_\alpha,\rho\}
    \right),
\end{equation}
where \(H\) denotes the Hamiltonian, \(L_\alpha\) represent the Lindblad noise operators, and \(\gamma^{\alpha\beta}\) is the positive Kossakowski matrix. The first term corresponds to the unitary von Neumann evolution, whereas the second term captures dissipation, decoherence, and noise. The mathematical structure and physical relevance of this formulation are well established; see, for instance, contemporary reviews and pedagogical expositions on open quantum dynamics and the Lindblad master equation \cite{Breuer:2015zlm,Lidar:2019qog,Manzano:2020yyw}.

Previous work has shown that the manifold of quantum states carries a rich Lie-Jordan geometry generated by the commutator and anticommutator, leading to Hamiltonian and gradient flows on state space. The present work shifts the emphasis from the geometry of state space to the geometry of the Lindblad generator itself. We show that the GKLS dissipator is determined by a universal trilinear tensor defined solely by the Lie-Jordan algebra, with the physical Lindblad operators entering only through tensor contractions.

The aim of this work is thus neither to reestablish the GKLS theorem nor to assert novelty in the geometric viewpoint on open quantum dynamics. Instead, our focus is more narrowly defined: we construct an internal Lie–Jordan tensorial structure for the GKLS generator, tailored to an orthonormal operator basis, and employ it to define curvature-like objects linked to the dissipative sector.

The algebraic point of departure is the well-known fact that the self-adjoint part of an operator algebra naturally carries two different products. The antisymmetric product is the Lie product, given by the commutator, while the symmetric product is the Jordan product, given by the anticommutator.
This combined Lie-Jordan framework dates back to the work of
Jordan, von Neumann, and Wigner on the algebraic foundations
of quantum mechanics \cite{Jordan:1933vh}
and has been identified as an essential ingredient in the
description of the dynamics of quantum Markovian systems
\cite{KLendi_1987,alicki2007quantum}.
Its importance lies mainly in the fact that it allows
nonlinear matrix products appearing in the Lindblad equation
to be recast in algebraic form \cite{Byrd2003,Kimura2005}.
More recently, the combined use of the Lie and Jordan products
has also been employed to implement computationally efficient
expansions in a basis of $SU(N)$ generators  \cite{PhysRevE.100.053305}.
In an orthonormal Hilbert–Schmidt basis \(\{h_\mu\}\), these products can be represented via structure tensors as \([h_\mu,h_\nu] = C_{\mu\nu}{}^\lambda h_\lambda,\quad \{h_\mu,h_\nu\} = B_{\mu\nu}{}^\lambda h_\lambda,\) where \(C_{\mu\nu}{}^\lambda\) encodes the Lie structure and \(B_{\mu\nu}{}^\lambda\) encodes the Jordan structure. This simple observation will serve as the algebraic backbone of the construction developed in this paper.

The application of Lie–Jordan structures in quantum mechanics is, naturally, well established. The tensorial geometric description of quantum states and observables has been extensively developed within the geometric formulation of quantum mechanics and of density matrices \cite{Ashtekar:1997ud,Grabowski:2005my,Carinena:2009emx}. 
Similarly, geometric treatments of open quantum dynamics have demonstrated that the GKLS generator admits a decomposition into Hamiltonian, gradient-like, and intrinsically dissipative components on the manifold of quantum states 
\cite{Ciaglia:2017dvs,CHRUSCINSKI2019221,CRUZPRADO2025170123}. These investigations constitute the immediate precursors of the present framework.

We formulate finite-dimensional Lindblad dynamics directly using the Lie-Jordan tensors of an operator basis. The von Neumann part corresponds to the Lie sector, while the dissipative GKLS term is encoded in a tensor built from the same algebraic data. This allows a systematic separation of the purely Hamiltonian, Jordan, mixed Lie-Jordan, and noise-induced contributions to the evolution. The tensorial structure also suggests an intrinsic geometry of operator space, rather than spacetime. In particular, the tensors from the Lie and Jordan products define connection-like and curvature-like objects for the non-unitary evolution. These are not ordinary spacetime Riemann curvature tensors, but algebraic-geometric structures on operator space that diagnose compatibility or obstruction between the Hamiltonian Lie sector and the dissipative Jordan sector.

This viewpoint complements existing approaches to curvature in quantum information and open quantum dynamics. For example, Jordan-algebraic structures can generate Riemannian geometries on both classical and quantum state spaces \cite{Ciaglia:2017dvs}, and quantum Markov semigroups have been analyzed using tools such as noncommutative Ricci curvature, entropy-based techniques, and curvature-dimension inequalities \cite{Carlen:2017hgj}.

Moreover, motivated by the effectiveness of frame techniques in differential geometry, we explore how the internal tensors obtained in our framework can be projected onto suitably adapted complex frames. This construction is only conceptually reminiscent of the Newman–Penrose formalism in general relativity \cite{Newman:1961qr,Bixano:2026ouq,Bixano:2026xum}, rather than being directly equivalent to it.

The present work builds directly on \cite{Bixano:2026qce}, where a model independent algebraic formulation of finite dimensional Lindblad  dynamics was introduced. In that construction, the Liouville superoperator is expressed in terms of a closed algebra of Hermitian operators, while the specific physical model is encoded entirely in the corresponding expansion coefficients. This separation provides a universal basis for the dissipative dynamics and substantially simplifies the construction of the Liouville
generator. Here, we develop the geometric and tensorial content underlying that algebraic framework. In particular, we formulate the dynamics in the Hilbert-Schmidt operator space, express the elementary dissipative map through the Lie-Jordan structure tensors, and interpret the combinations \(B+C\) and \(B-C\) as internal left and right transports. This reformulation clarifies the origin of the universal superoperator basis and provides a direct geometric representation of the GKLS dissipative kernel.
The main result of this work is that the dissipative part of the GKLS
equation can be separated into two distinct levels.
At the universal level, it is completely characterized by a trilinear
map whose components depend only on the Lie and Jordan structure tensors
of the operator basis.
The specific physical model enters only through the expansion coefficients
of the Lindblad operators and the Kossakowski matrix \cite{Kossakowski:1972sbn}.
This separation cleanly distinguishes the universal algebraic structure 
of open-system dynamics from the model-dependent physical information.

The paper is structured as follows. In section \ref{Sec:Hilber-Schmidt metric and orthonormal basis}, we introduce the Hilbert-Schmidt geometry and an orthonormal Hermitian basis for the operator space. The associated Lie-Jordan structure is then developed in section \ref{Sec:Lie-Jordan structure of the operators space}. In sections \ref{Sec:von Neumann equation} and \ref{Sec:The GKLS equation in the LieJordan frame}, we formulate the von Neumann and GKLS equations within this framework and define the universal dissipative tensor. Section \ref{Sec:InternalTransports} introduces its internal left-right transport representation, affine geometric interpretation, and precise relation to ref \cite{Bixano:2026qce}. The structural properties of the universal dissipative map, including trace preservation, Hermitian conjugation, the reality of its component representation, and its Hilbert-Schmidt adjoint, are established in section \ref{Sec:StructuralPropertiesDissipativeMap}. Finally, section \ref{Sec:QubitRealization} illustrates the formalism explicitly for a qubit through pure dephasing and amplitude damping channels.

\section{Hilber-Schmidt metric and orthonormal operator basis}\label{Sec:Hilber-Schmidt metric and orthonormal basis}

\subsection{Convention}\label{SubSec:Conventions}
\begin{definition}[Hilbert space finite dimensional]
A \emph{finite-dimensional Hilbert space} is a complex vector space \(\mathcal H\) of finite dimension \(d\), equipped with a positive-definite inner product. In this work we always take
\[
     \mathcal H\simeq \mathbb C^d.
\]
For a qubit, \(d=2\). For a qutrit, \(d=3\). The finiteness of the dimension allows all operators to be represented as finite complex matrices.
\end{definition}
\begin{definition}[Complex vector space]
We denote by
\[
     \mathcal B(\mathcal H):=\{X:\mathcal H\to\mathcal H\mid X \text{ is linear}\}
\]
the complex vector space of all linear operators on \(\mathcal H\). Since \(\dim_{\mathbb C}\mathcal H=d\), we have the identification
\[
     \mathcal B(\mathcal H)\simeq M_d(\mathbb C),
\]
where \(M_d(\mathbb C)\) is the space of all complex matrices of size \(d\times d\). Therefore,
\[
     \dim_{\mathbb C}\mathcal B(\mathcal H)=d^2.
\]
\end{definition}
\begin{definition}[Adjoint, trace and identity]
For \(X\in\mathcal B(\mathcal H)\), we denote its Hermitian adjoint by \(X^\dagger\). In matrix form, \(X^\dagger\) is the conjugate transpose of the matrix. The trace is denoted by \(\operatorname{Tr}(X)\), and the identity on \(\mathcal H\) by \(\mathbb I\).
\end{definition}

\begin{definition}[Hermitian operators and density matrices]
The real subspace of Hermitian operators is defined by
\[
     \operatorname{Herm}(\mathcal H):=\{X\in\mathcal B(\mathcal H)\mid X^\dagger=X\}.
\]
An operator \(\rho\in\operatorname{Herm}(\mathcal H)\) is termed a density matrix if it satisfies
\[
     \rho\geq 0,
     \qquad
     \operatorname{Tr}(\rho)=1.
\]
The condition \(\rho \geq 0\) means that all eigenvalues of \(\rho\) are non-negative, or, equivalently, that \(\langle \psi, \rho \psi \rangle \geq 0\) holds for every \(\psi \in \mathcal H\). The collection of such density matrices is denoted by
\[
     \mathcal S(\mathcal H):=\{\rho\in\operatorname{Herm}(\mathcal H)\mid \rho\geq 0,\ \operatorname{Tr}\rho=1\}.
\]
\end{definition}
\begin{definition}[Index convention]
We will use Greek indices
\[
     \mu,\nu,\lambda,\sigma,\delta=0,1,\ldots,d^2-1
\]
to label a complete basis of \(\mathcal B(\mathcal H)\). The index \(0\) is usually reserved for the normalized identity. We will use Latin indices
\[
     i,j,k,a,b,c=1,\ldots,d^2-1
\]
for the traceless part. Whenever an index appears once upstairs and once downstairs, the Einstein summation convention will be used.
\end{definition}
\subsection{Hilbert--Schmidt inner product}\label{SubSec:Inner product of Hilbert-Schmidt}
The Hilbert–Schmidt product on \(\BH\) is defined as follows
\begin{equation} \label{eq:HS-product}
     \braketHS{A}{B}:=\Tr(A^\dagger B),
\end{equation}
The Hilbert–Schmidt product makes \(\mathcal B(\mathcal H)\) a finite-dimensional complex Hilbert space, and its restriction to \(\operatorname{Herm}(\mathcal H)\) gives a real Euclidean inner product. This structure appears in quantum-state geometry, Bloch and qudit representations, and superoperator formalisms, and here supplies the metric for defining operator components, dual coframes, and the Lie–Jordan structure tensors \cite{Grabowski:2005my,BengtssonZyczkowski2017,Nambu:2005amn}. The product is conjugate linear in its first argument, linear in its second, and positive definite:
\[
    \langle A,A\rangle_{\mathrm{HS}}
    =
    \operatorname{Tr}(A^\dagger A)
    \geq 0,
\]
with equality if and only if \(A=0\).
\subsection{Orthonormal basis}\label{SubSec:Orthonormal basis}
Let $\{h_\mu\}_{\mu=0}^{d^2-1}$ be an orthonormal basis of $\BH$ with respect to \eqref{eq:HS-product}, i.e. \(\braketHS{h_\mu}{h_\nu}=\delta_{\mu\nu}\), when the basis is hermitian $h_\mu^\dagger=h_\mu$, then \( \braketHS{h_\mu}{h_\nu}=\Tr(h_\mu h_\nu)=\delta_{\mu\nu}\). Therefore, every operator $X\in\BH$ admits the unique expansion
\begin{equation} \label{eq:operator-expansion}
     X=X^\mu h_\mu,
\end{equation}
with coefficients \( X^\mu=\braketHS{h_\mu}{X}=\Tr(h_\mu^\dagger X)\). If $X$ and $h_\mu$ are Hermitians, these coefficients are real.

Let the coframe basis $\{\theta^\mu\}$ be defined by
\begin{equation} \label{eq:theta-definition}
     \theta^\mu(X):=\braketHS{h_\mu}{X}=\Tr(h_\mu^\dagger X).
\end{equation}
Thus \(\theta^\mu(h_\nu)=\delta^\mu{}_{\nu}\), therefore \( X=X^\mu h_\mu,\quad X^\mu=\theta^\mu(X)\).

The Hilbert-Schmidt metric in the basis $\{h_\mu\}$ has components \( g_{\mu\bar\nu}:=\braketHS{h_\mu}{h_\nu}=\delta_{\mu\nu}\), if $X=X^\mu h_\mu$ and $Y=Y^\nu h_\nu$, then
\begin{equation} \label{eq:HS-components}
    \braketHS{X}{Y}=\overline{X^\mu}Y^\nu g_{\mu\bar\nu}
     =\sum_{\mu=0}^{d^2-1}\overline{X^\mu}Y^\mu.
\end{equation}
For Hermitian operators, the coefficients are real and the metric reduces to the standard Euclidean metric in real coordinates.

From now on, we choose the Hermitian Hilbert--Schmidt basis such that
\[
    h_0=\frac{\mathbb I}{\sqrt d},
    \qquad
    \operatorname{Tr}(h_i)=0,
    \qquad
    i=1,\ldots,d^2-1.
\]
Accordingly, \(\operatorname{Tr}(X)=\sqrt d\,X^0\), so \(X^0\) is the identity component, and the other coefficients describe the traceless sector.

The operator space contains \(d^2\) independent basis directions. Since a linear superoperator acts on this \(d^2\)-dimensional space,
\[
    \dim_{\mathbb C}
    \operatorname{End}\!\bigl(\mathcal B(\mathcal H)\bigr)
    =
    d^4.
\]
Thus a universal basis for arbitrary linear superoperators has \(d^4\) elements. In the notation of Ref.~\cite{Bixano:2026qce}, the Hilbert space has dimension \(m\), the operator space has dimension \(n=m^2\), and the universal superoperator space has dimension \(n^2=m^4\).

Specific physical models may use fewer elements if their coefficients are sparse or if symmetries, selection rules, or invariant subspaces constrain the dynamics.

\section{Lie-Jordan structure of the operator space}\label{Sec:Lie-Jordan structure of the operators space}
The Lie–Jordan structure used in this section is well known. The symmetric product of observables belongs to the tradition initiated by Jordan, von Neumann, and Wigner \cite{Jordan:1933vh}, while its modern tensorial formulation on state and observable spaces appears in the works of Grabowski–Ku\'s–Marmo, Cariñena et al., and Ciaglia et al. \cite{Grabowski:2005my,Carinena:2009emx,Ciaglia:2017dvs}. The novelty here is not the existence of these products, but the systematic use of their structure constants $B$ and $C$ to construct the internal tensor of the dissipator \cite{Byrd2003,Kimura2005}.
Some of the algebraic identities derived below were already
obtained in Ref.~\cite{Bixano:2026qce}.
Here, however, they are revisited from a geometric perspective,
where they naturally acquire the interpretation of tensorial
objects associated with the dissipator.

The ordinary associative product of operators $XY$ has two parts $\comm{X}{Y}:=XY-YX$, and $ \acom{X}{Y}:=XY+YX$, and the jordan product is defined by \( X\circJ Y:=\frac12\acom{X}{Y}\). Therefore, in a basis $\{h_\mu\}$, the constants structure are
\begin{align}
     \comm{h_\mu}{h_\nu}&=C_{\mu\nu}{}^\lambda h_\lambda,
     \label{eq:C-definition}
     \\
     \acom{h_\mu}{h_\nu}&=B_{\mu\nu}{}^\lambda h_\lambda,
     \label{eq:B-definition}
\end{align}
where by definition \(C_{\mu\nu}{}^\lambda=-C_{\nu\mu}{}^\lambda\) and \(B_{\mu\nu}{}^\lambda=B_{\nu\mu}{}^\lambda\). Moreover, in the coframe they are given by \(C_{\mu\nu}{}^\lambda=\theta^\lambda\big(\comm{h_\mu}{h_\nu}\big)\) and \(B_{\mu\nu}{}^\lambda=\theta^\lambda\big(\acom{h_\mu}{h_\nu}\big)\).

The associative product can be reconstructed from the Lie and Jordan products. In particular,
\begin{subequations}\label{eq:ordered-products-BC}
\begin{align}
    h_\mu h_\nu
    &=
    \frac12
    \left(
    B_{\mu\nu}{}^\lambda
    +
    C_{\mu\nu}{}^\lambda
    \right)h_\lambda,
    \label{eq:hmu-hnu-BC}
    \\
    h_\nu h_\mu
    &=
    \frac12
    \left(
    B_{\mu\nu}{}^\lambda
    -
    C_{\mu\nu}{}^\lambda
    \right)h_\lambda.
    \label{eq:hnu-hmu-BC}
\end{align}
\end{subequations}
These two ordered-product relations are the algebraic source of the \(B+C\) and \(B-C\) combinations that subsequently emerge in the GKLS dissipative kernel.
\subsection{Super-operators}\label{SubSec:Super-operators}
Let the linear super-operators over $\BH$ 
\begin{align}
     e_\mu(X)&:=\comm{h_\mu}{X},
     \label{eq:e-definition}
     \\
     \widetilde e_\mu(X)&:=\acom{h_\mu}{X}.
 \label{eq:etilde-definition}
\end{align}
Then acting over the basis \(h_\nu\), we can see that \( e_\mu(h_\nu)=C_{\mu\nu}{}^\lambda h_\lambda\), and \(\widetilde e_\mu(h_\nu)=B_{\mu\nu}{}^\lambda h_\lambda\), thus the matrix representation of this objects is
\begin{align}
     (e_\mu)^\lambda{}_{\nu}&=C_{\mu\nu}{}^\lambda,
     \\
     (\widetilde e_\mu)^\lambda{}_{\nu}&=B_{\mu\nu}{}^\lambda.
\end{align}
For an arbitrary operator \(A=A^\mu h_\mu\in\mathcal B(\mathcal H)\), thus we define
\begin{equation}\label{eq:eA-etildeA-definition}
    e_A
    :=
    A^\mu e_\mu,
    \qquad
    \widetilde e_A
    :=
    A^\mu\widetilde e_\mu.
\end{equation}
Their action on \(X\in\mathcal B(\mathcal H)\) is
\begin{equation}\label{eq:eA-etildeA-action}
    e_A(X)
    =
    [A,X],
    \qquad
    \widetilde e_A(X)
    =
    \{A,X\}.
\end{equation}
By linearity, the superoperators \(e_\mu\) and \(\widetilde e_\mu\) generate the Lie and Jordan actions of any operator in \(\mathcal B(\mathcal H)\). The maps \(e_\mu\) encode the adjoint (Lie) action of the operator basis and generate the Hamiltonian part of the von Neumann equation. In contrast, \(\widetilde e_\mu\) describe Jordan multiplication by the basis elements, which contributes to the dissipative part of the GKLS generator. However, a single \(\widetilde e_\mu\) is not itself a valid dissipative generator, as it need not preserve trace, positivity, or complete positivity. The physical GKLS dissipator arises from a specific combination of Lie and Jordan actions.

Now, let $X\in\BH$, then \( \comm{e_\mu}{e_\nu}(X) =e_\mu(e_\nu(X))-e_\nu(e_\mu(X)) =\comm{h_\mu}{\comm{h_\nu}{X}} -\comm{h_\nu}{\comm{h_\mu}{X}}\), using the Jacobi identity 
\[
     \comm{h_\mu}{\comm{h_\nu}{X}}
     -\comm{h_\nu}{\comm{h_\mu}{X}}
     =\comm{\comm{h_\mu}{h_\nu}}{X},
\]
and by applying \eqref{eq:C-definition}, we obtain
\begin{align*}
     \comm{e_\mu}{e_\nu}(X)=\comm{\comm{h_\mu}{h_\nu}}{X}=\comm{C_{\mu\nu}{}^\lambda h_\lambda}{X}=C_{\mu\nu}{}^\lambda\comm{h_\lambda}{X}=C_{\mu\nu}{}^\lambda e_\lambda(X).
\end{align*}
Therefore, the identity holds \cite{georgi2000lie}
\begin{equation} \label{eq:e-e-closure}
 \comm{e_\mu}{e_\nu}=C_{\mu\nu}{}^\lambda e_\lambda.
\end{equation}
\begin{lemma}[Lie-Jordan compatibility]
For any $A,B,C\in\BH$,
\begin{equation}
     \comm{A}{\acom{B}{C}} \label{eq:comm-derivation-anticomm}
     =\acom{\comm{A}{B}}{C}+\acom{B}{\comm{A}{C}}.
\end{equation}
\end{lemma}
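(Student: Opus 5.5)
The identity \eqref{eq:comm-derivation-anticomm} states that the inner derivation \(e_A=\comm{A}{\cdot}\) of \eqref{eq:eA-etildeA-action} is also a derivation of the Jordan product. My plan is to prove it by direct expansion in the associative algebra \(\BH\), using only bilinearity and associativity of the operator product. Note that the symbols \(B,C\) in the statement denote arbitrary operators, not the structure tensors.

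First I expand the left-hand side as
\[
\comm{A}{\acom{B}{C}}=ABC+ACB-BCA-CBA,
\]
which gives four ordered monomials. Next I expand the right-hand side. I write \(\acom{\comm{A}{B}}{C}=ABC-BAC+CAB-CBA\) and \(\acom{B}{\comm{A}{C}}=BAC-BCA+ACB-CAB\). Adding these, the cross terms \(\pm BAC\) and \(\pm CAB\) cancel pairwise. The remaining terms are \(ABC+ACB-BCA-CBA\), which coincide with the left-hand side.

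An alternative, more structural route is to use the Leibniz rule \(\comm{A}{XY}=\comm{A}{X}Y+X\comm{A}{Y}\). I apply it to both \(BC\) and \(CB\) and add the results. Grouping terms then yields \(\comm{A}{B}C+C\comm{A}{B}\) and \(B\comm{A}{C}+\comm{A}{C}B\), which are exactly the two anticommutators. I would present this version, since it explains why \(e_A\) respects \(\widetilde e\): it is an associative derivation, hence a derivation of both the Lie and the Jordan products.

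There is no genuine obstacle here. The only point requiring care is bookkeeping of the cancelling cross terms \(BAC\) and \(CAB\), and the notational clash between the operators \(B,C\) and the structure tensors.

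Specializing \(A=h_\mu\), \(B=h_\nu\), \(C=h_\lambda\) and inserting \eqref{eq:C-definition}–\eqref{eq:B-definition} then gives the component compatibility relation between \(C\) and \(B\). That relation is not needed for the lemma itself.
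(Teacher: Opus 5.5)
Your direct expansion is correct and matches the paper's proof: both sides reduce to \(ABC+ACB-BCA-CBA\) once the cross terms \(BAC\) and \(CAB\) cancel. The Leibniz-rule version you prefer is an equally valid repackaging of the same computation, and it makes explicit that \([A,\cdot\,]\) is a derivation of the associative product.
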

\begin{proof}
Expanding the left hand side,
\begin{align*}
     \comm{A}{\acom{B}{C}}
     &=A(BC+CB)-(BC+CB)A
     \\
     &=ABC+ACB-BCA-CBA.
\end{align*}
On the other hand 
\begin{align*}
     \acom{\comm{A}{B}}{C}+\acom{B}{\comm{A}{C}}
     &=(AB-BA)C+C(AB-BA)
     \\
     &\quad +B(AC-CA)+(AC-CA)B
     \\
     &=ABC-BAC+CAB-CBA
     \\
     &\quad +BAC-BCA+ACB-CAB
     \\
     &=ABC+ACB-BCA-CBA.
\end{align*}
Therefore the both sides coincide.
\end{proof}
\begin{proposition}[Mixture closure]
The super-operators $\widetilde e_\mu$ and $e_\nu$ fulfill
\begin{equation} \label{eq:etilde-e-closure}
 \comm{\widetilde e_\mu}{e_\nu}=C_{\mu\nu}{}^\lambda\widetilde e_\lambda.
\end{equation}
Or equivalently, using $\comm{A}{B}=-\comm{B}{A}$ 
\[
     \comm{e_\nu}{\widetilde e_\mu}=C_{\nu\mu}{}^\lambda\widetilde e_\lambda.
\]
\end{proposition}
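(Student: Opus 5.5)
The plan is to verify \eqref{eq:etilde-e-closure} pointwise, on an arbitrary $X\in\BH$, exactly as was done for \eqref{eq:e-e-closure}. The only difference is that the Jacobi identity is replaced by the Lie-Jordan compatibility identity \eqref{eq:comm-derivation-anticomm}. Since both sides of \eqref{eq:etilde-e-closure} are linear superoperators, agreement on every $X$ gives equality as superoperators.

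\emph{Step 1: expand the commutator of superoperators.} Using the definitions \eqref{eq:e-definition} and \eqref{eq:etilde-definition},
\[
\comm{\widetilde e_\mu}{e_\nu}(X)=\widetilde e_\mu\bigl(e_\nu(X)\bigr)-e_\nu\bigl(\widetilde e_\mu(X)\bigr)
=\acom{h_\mu}{\comm{h_\nu}{X}}-\comm{h_\nu}{\acom{h_\mu}{X}}.
\]

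\emph{Step 2: apply the compatibility lemma.} Take $A=h_\nu$, $B=h_\mu$, $C=X$ in \eqref{eq:comm-derivation-anticomm}. This gives
\[
\comm{h_\nu}{\acom{h_\mu}{X}}=\acom{\comm{h_\nu}{h_\mu}}{X}+\acom{h_\mu}{\comm{h_\nu}{X}}.
\]
Substituting into Step 1, the terms $\acom{h_\mu}{\comm{h_\nu}{X}}$ cancel. What remains is $-\acom{\comm{h_\nu}{h_\mu}}{X}=\acom{\comm{h_\mu}{h_\nu}}{X}$.

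\emph{Step 3: insert the structure constants.} By \eqref{eq:C-definition} and the bilinearity of the anticommutator,
\[
\acom{\comm{h_\mu}{h_\nu}}{X}=C_{\mu\nu}{}^\lambda\acom{h_\lambda}{X}=C_{\mu\nu}{}^\lambda\,\widetilde e_\lambda(X).
\]
This is the claim.

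\emph{Step 4: the equivalent form.} This follows from antisymmetry of both the superoperator commutator and the structure constants:
\[
\comm{e_\nu}{\widetilde e_\mu}=-\comm{\widetilde e_\mu}{e_\nu}=-C_{\mu\nu}{}^\lambda\widetilde e_\lambda=C_{\nu\mu}{}^\lambda\widetilde e_\lambda.
\]

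The only delicate point, and the main place an error could enter, is the sign bookkeeping in Step 2. One must use $\comm{h_\nu}{h_\mu}=-\comm{h_\mu}{h_\nu}$ correctly so that the result carries $C_{\mu\nu}{}^\lambda$ rather than $C_{\nu\mu}{}^\lambda$. Everything else is routine once the lemma is invoked.
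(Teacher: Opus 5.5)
Your proposal is correct and follows the paper's proof essentially line for line. Like the paper, you evaluate on an arbitrary $X$, apply the Lie-Jordan compatibility lemma with $A=h_\nu$, $B=h_\mu$, $C=X$, cancel the $\acom{h_\mu}{\comm{h_\nu}{X}}$ terms, and use antisymmetry to insert $C_{\mu\nu}{}^\lambda$. Your explicit Step~4 for the equivalent form is a correct small addition.
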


\begin{proof}
Taking $X\in\BH$, then
\begin{align*}
     \comm{\widetilde e_\mu}{e_\nu}(X)
     &=\widetilde e_\mu(e_\nu(X))-e_\nu(\widetilde e_\mu(X))
     \\
     &=\acom{h_\mu}{\comm{h_\nu}{X}}
     -\comm{h_\nu}{\acom{h_\mu}{X}}
     \\
     &=\acom{h_\mu}{\comm{h_\nu}{X}}
     -\acom{h_\mu}{\comm{h_\nu}{X}}
     -\acom{\comm{h_\nu}{h_\mu}}{X}
     \\
     &=-\acom{\comm{h_\nu}{h_\mu}}{X}=\acom{\comm{h_\mu}{h_\nu}}{X}=\acom{C_{\mu\nu}{}^\lambda h_\lambda}{X}
     \\
     &=C_{\mu\nu}{}^\lambda\widetilde e_\lambda(X).
\end{align*}
Where we have used \eqref{eq:comm-derivation-anticomm} with $A=h_\nu$, $B=h_\mu$, $C=X$, then \( \comm{h_\nu}{\acom{h_\mu}{X}}=\acom{\comm{h_\nu}{h_\mu}}{X}+\acom{h_\mu}{\comm{h_\nu}{X}}\).
Then \eqref{eq:etilde-e-closure} is proven.
\end{proof}
\begin{proposition}[Jordan closure]
The Jordan super-operators fulfill
\begin{equation} \label{eq:etilde-etilde-closure}
     \comm{\widetilde e_\mu}{\widetilde e_\nu}=C_{\mu\nu}{}^\lambda e_\lambda.
\end{equation}
\end{proposition}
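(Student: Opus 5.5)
We check the identity pointwise on an arbitrary $X\in\BH$, in the same spirit as the proofs of \eqref{eq:e-e-closure} and \eqref{eq:etilde-e-closure}. By definition \eqref{eq:etilde-definition},
\[
    \comm{\widetilde e_\mu}{\widetilde e_\nu}(X)
    =\acom{h_\mu}{\acom{h_\nu}{X}}-\acom{h_\nu}{\acom{h_\mu}{X}}.
\]
The aim is to reduce this to $\comm{\comm{h_\mu}{h_\nu}}{X}$. Once that is done, the structure constants \eqref{eq:C-definition} give
\[
    C_{\mu\nu}{}^\lambda\comm{h_\lambda}{X}=C_{\mu\nu}{}^\lambda e_\lambda(X),
\]
which is the claim.

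The key step is the operator identity, valid for all $A,B,X\in\BH$,
\[
    \acom{A}{\acom{B}{X}}-\acom{B}{\acom{A}{X}}=\comm{\comm{A}{B}}{X}.
\]
I will prove it by direct expansion in the associative product, as in the proof of the Lie--Jordan compatibility lemma. The left-hand side expands to
\[
    ABX+AXB+BXA+XBA-BAX-BXA-AXB-XAB.
\]
The mixed terms $AXB$ and $BXA$ cancel pairwise, which leaves
\[
    (AB-BA)X-X(AB-BA)=\comm{\comm{A}{B}}{X}.
\]
Setting $A=h_\mu$ and $B=h_\nu$ then finishes the argument by linearity of the bracket in its first slot.

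Alternatively, one can write $\widetilde e_\mu=L_\mu+R_\mu$ and $e_\mu=L_\mu-R_\mu$, where $L_\mu$ and $R_\mu$ denote left and right multiplication by $h_\mu$. Left multiplications commute with right multiplications, $\comm{L_\mu}{L_\nu}=L_{\comm{h_\mu}{h_\nu}}$, and $\comm{R_\mu}{R_\nu}=-R_{\comm{h_\mu}{h_\nu}}$. Together these give $\comm{\widetilde e_\mu}{\widetilde e_\nu}=L_{\comm{h_\mu}{h_\nu}}-R_{\comm{h_\mu}{h_\nu}}=e_{\comm{h_\mu}{h_\nu}}$, which is the same result.

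There is no real obstacle here. The only point needing care is that the cross terms $AXB$ and $BXA$ cancel and do not add. This cancellation is exactly why the commutator of two Jordan actions closes on the Lie actions $e_\lambda$ rather than on the $\widetilde e_\lambda$.
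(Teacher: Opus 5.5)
Your proof is correct and matches the paper's: the paper also expands $\widetilde e_\mu(\widetilde e_\nu(X))$ and $\widetilde e_\nu(\widetilde e_\mu(X))$ in the associative product, cancels the mixed terms $h_\mu X h_\nu$ and $h_\nu X h_\mu$, and applies \eqref{eq:C-definition} to $\comm{\comm{h_\mu}{h_\nu}}{X}$. Your alternative via left and right multiplications is also valid and anticipates the paper's later transport identities $\nabla^{(\pm)}_\mu$.
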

\begin{proof}
Let $X\in\BH$,
\begin{align*}
     \widetilde e_\mu(\widetilde e_\nu(X))
     &=\acom{h_\mu}{\acom{h_\nu}{X}}
     \\
     &=h_\mu(h_\nu X+Xh_\nu)+(h_\nu X+Xh_\nu)h_\mu
     \\
     &=h_\mu h_\nu X+h_\mu Xh_\nu+h_\nu Xh_\mu+Xh_\nu h_\mu.
\end{align*}
On the other hand,
\begin{align*}
     \widetilde e_\nu(\widetilde e_\mu(X))
     &=h_\nu h_\mu X+h_\nu Xh_\mu+h_\mu Xh_\nu+Xh_\mu h_\nu.
\end{align*}
Subtracting,
\begin{align*}
     \comm{\widetilde e_\mu}{\widetilde e_\nu}(X)
     &=(h_\mu h_\nu-h_\nu h_\mu)X+X(h_\nu h_\mu-h_\mu h_\nu)
     \\
     &=\comm{h_\mu}{h_\nu}X-X\comm{h_\mu}{h_\nu}
     \\
     &=\comm{\comm{h_\mu}{h_\nu}}{X}=C_{\mu\nu}{}^\lambda\comm{h_\lambda}{X}
     =C_{\mu\nu}{}^\lambda e_\lambda(X).
\end{align*}
The equality holds for all $X$.
\end{proof}
\begin{proposition}[Mixture tensor for Lie-Jordan]
Defining
\begin{equation} \label{eq:D-tensor}
     D_{\mu\nu\sigma}{}^\delta
     :=C_{\nu\sigma}{}^\kappa B_{\mu\kappa}{}^\delta
     +B_{\mu\sigma}{}^\kappa C_{\nu\kappa}{}^\delta.
\end{equation}
Then, the super-operators \(\widetilde e_\mu\) and \(e_\nu\), fulfill
\begin{equation}\label{eq:etilde-etilde-closure-conmutador}
     \acom{\widetilde e_\mu}{e_\nu}(h_\sigma)
     =D_{\mu\nu\sigma}{}^\delta h_\delta.
\end{equation}
\end{proposition}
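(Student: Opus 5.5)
The plan is to compute the anticommutator of superoperators directly on a basis element $h_\sigma$. We unfold the definitions \eqref{eq:e-definition} and \eqref{eq:etilde-definition} one layer at a time, using the matrix representations $(e_\mu)^\lambda{}_\nu=C_{\mu\nu}{}^\lambda$ and $(\widetilde e_\mu)^\lambda{}_\nu=B_{\mu\nu}{}^\lambda$. The statement only involves composition of linear maps, so no Jacobi-type or Lie–Jordan compatibility identity should be needed. Linearity of $e_\nu$ and $\widetilde e_\mu$, together with the structure constants \eqref{eq:C-definition}–\eqref{eq:B-definition}, should suffice.

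The steps, in order, are as follows. First, write
\[
\acom{\widetilde e_\mu}{e_\nu}(h_\sigma)=\widetilde e_\mu\bigl(e_\nu(h_\sigma)\bigr)+e_\nu\bigl(\widetilde e_\mu(h_\sigma)\bigr).
\]
Second, for the first term use $e_\nu(h_\sigma)=\comm{h_\nu}{h_\sigma}=C_{\nu\sigma}{}^\kappa h_\kappa$. By linearity of $\widetilde e_\mu$ this gives
\[
C_{\nu\sigma}{}^\kappa\,\widetilde e_\mu(h_\kappa)=C_{\nu\sigma}{}^\kappa B_{\mu\kappa}{}^\delta h_\delta .
\]
Third, for the second term use $\widetilde e_\mu(h_\sigma)=\acom{h_\mu}{h_\sigma}=B_{\mu\sigma}{}^\kappa h_\kappa$. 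By linearity of $e_\nu$ this gives
\[
B_{\mu\sigma}{}^\kappa\, e_\nu(h_\kappa)=B_{\mu\sigma}{}^\kappa C_{\nu\kappa}{}^\delta h_\delta .
\]
Finally, add the two contributions and collect the coefficient of $h_\delta$. This reproduces exactly the definition \eqref{eq:D-tensor} of $D_{\mu\nu\sigma}{}^\delta$. Uniqueness of the expansion in the basis $\{h_\delta\}$, equivalently applying the coframe $\theta^\delta$, then identifies the components.

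There is no substantive obstacle. The only point requiring care is index bookkeeping. The structure constants must be read with the correct slot order, since $e_\mu(h_\nu)=C_{\mu\nu}{}^\lambda h_\lambda$ has the acting index first. The dummy index $\kappa$ must also be kept distinct from $\delta$ when composing the two maps. If desired, one can remark afterwards that, unlike the commutators in \eqref{eq:etilde-e-closure} and \eqref{eq:etilde-etilde-closure}, this anticommutator does not close linearly on $\{e_\lambda,\widetilde e_\lambda\}$. That explains why a genuinely new three-index-plus-one tensor $D$ appears rather than a contraction of a single structure constant with a basis superoperator.
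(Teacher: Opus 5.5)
Your proposal is correct and is essentially the paper's proof. You expand $\acom{\widetilde e_\mu}{e_\nu}(h_\sigma)$ into its two compositions, insert $e_\nu(h_\sigma)=C_{\nu\sigma}{}^\kappa h_\kappa$ and $\widetilde e_\mu(h_\sigma)=B_{\mu\sigma}{}^\kappa h_\kappa$, use linearity, and read off the coefficient of $h_\delta$, which matches the paper step for step.
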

\begin{proof}
Acting on $h_\sigma$,
\begin{align*}
     \acom{\widetilde e_\mu}{e_\nu}(h_\sigma)
     &=\widetilde e_\mu(e_\nu(h_\sigma))+e_\nu(\widetilde e_\mu(h_\sigma))
     \\
     &=\widetilde e_\mu(C_{\nu\sigma}{}^\kappa h_\kappa)
     +e_\nu(B_{\mu\sigma}{}^\kappa h_\kappa)
     \\
     &=C_{\nu\sigma}{}^\kappa B_{\mu\kappa}{}^\delta h_\delta
     +B_{\mu\sigma}{}^\kappa C_{\nu\kappa}{}^\delta h_\delta.
\end{align*}
comparing with \eqref{eq:etilde-etilde-closure-conmutador}, we obtain \eqref{eq:D-tensor}.
\end{proof}
The tensor \(D_{\mu\nu\sigma}{}^\delta\) is the component representation of the symmetrized composition between a Jordan multiplication and a Lie derivation:
\[
    \{\widetilde e_\mu,e_\nu\}(h_\sigma)
    =
    D_{\mu\nu\sigma}{}^\delta h_\delta.
\]
This should not be viewed as an obstacle to achieving algebraic closure. In fact, the commutator algebra generated by \(e_\mu\) and \(\widetilde e_\mu\) is already closed, as demonstrated by Eqs.~\eqref{eq:e-e-closure}, \eqref{eq:etilde-e-closure}, and \eqref{eq:etilde-etilde-closure}. Rather, the tensor \(D_{\mu\nu\sigma}{}^\delta\) encodes the mixed, symmetrized combinations that arise when the ordered operator products appearing in the GKLS dissipator are decomposed into their Lie and Jordan components.
This tensor will be used explicitly in Appendix \ref{App:SuperoperatorBlocks}, where it represents the mixed Jordan-Lie blocks in the superoperator decomposition of \(L_\alpha\rho L_\beta^\dagger\).
\section{Von Neumann equation}\label{Sec:von Neumann equation}
The Von Neumann equations is given by
\begin{equation}
     \dot\rho=-\frac{\mathrm{i}}{\hbar}\comm{H}{\rho}.
\end{equation}
Expanding \(H=H^\mu h_\mu, \quad \rho=\rho^\nu h_\nu\), we obtain
\begin{align*}
     \dot\rho
     &=-\frac{\mathrm{i}}{\hbar}H^\mu\rho^\nu\comm{h_\mu}{h_\nu}=-\frac{\mathrm{i}}{\hbar}H^\mu\rho^\nu C_{\mu\nu}{}^\lambda h_\lambda.
\end{align*}
Therefore, te vectorization is
\begin{equation} \label{eq:von-neumann-components}
     \dot\rho^\lambda=-\frac{\mathrm{i}}{\hbar}H^\mu C_{\mu\nu}{}^\lambda\rho^\nu.
\end{equation}

\paragraph{Trace preservation.}
Since \(h_0=\mathbb I/\sqrt d\), the trace of the density operator is determined by its identity component, \(\operatorname{Tr}(\rho)=\sqrt d\,\rho^0\). Furthermore,
\[
    C_{\mu\nu}{}^0
    =
    \operatorname{Tr}\!\left(
    h_0[h_\mu,h_\nu]
    \right)
    =
    \frac1{\sqrt d}
    \operatorname{Tr}[h_\mu,h_\nu]
    =
    0,
\]
where cyclicity of the trace has been used. Therefore, the identity component of the von Neumann equation satisfies \( \dot\rho^0=0\), which is precisely the component form of trace preservation.

\begin{remark}[A Lie-Jordan linear model]
A general linear combination of Lie and Jordan multiplications may be written as
\[
    \dot\rho
    =
    -\frac{\mathrm{i}}{\hbar}[H,\rho]
    +
    \frac1{\hbar}\{G,\rho\}.
\]
In the Hilbert-Schmidt basis, this equation becomes
\[
    \dot\rho^\lambda
    =
    -\frac{\mathrm{i}}{\hbar}
    H^\mu C_{\mu\nu}{}^\lambda\rho^\nu
    +
    \frac1{\hbar}
    G^\mu B_{\mu\nu}{}^\lambda\rho^\nu.
\]
This equation is an intermediate geometric model included only to illustrate separately the actions of the Lie and Jordan tensors. In general, it is not a GKLS equation and does not by itself ensure trace preservation, positivity, or complete positivity.
\end{remark}
\section{The GKLS equation in the Lie-Jordan frame}\label{Sec:The GKLS equation in the LieJordan frame}
The finite-dimensional GKLS equation can be written as
\begin{equation}\label{eq:GKLS}
    \dot\rho
    =-\frac{\mathrm{i}}{\hbar}\comm{H}{\rho}
    +\gamma^{\alpha\beta}\mathfrak{D}_{\alpha\beta}(\rho),
\end{equation}
where
\begin{equation}\label{eq:D-alpha-beta}
    \mathfrak{D}_{\alpha\beta}(\rho)
    :=L_\alpha\rho L_\beta^\dagger
    -\frac12\acom{L_\beta^\dagger L_\alpha}{\rho}.
\end{equation}
Here, \(\gamma^{\alpha\beta}\) denotes a Hermitian, positive semidefinite Kossakowski matrix, and the operators \(L_\alpha\) are referred to as Lindblad operators, which, when the Kossakowski matrix is diagonal, may also be interpreted as quantum jump operators. The Hamiltonian part has already been expressed in \eqref{eq:von-neumann-components}. The remainder of this section derives the component representation of \(\cD_{\alpha\beta}\) using only the tensors \(B\) and \(C\).

To vectorize \eqref{eq:GKLS}, we start by expanding
\(
    H=H^\mu h_\mu,
    \quad
    L_\alpha=\ell_\alpha{}^\mu h_\mu,
    \quad
    L_\beta^\dagger=\overline{\ell_\beta{}^\nu}\,h_\nu,
    \quad
    \rho=\rho^\sigma h_\sigma
\), the conjugation appears because the basis \(h_\nu\) is Hermitian. 
The ordered-product identities \eqref{eq:ordered-products-BC} allow the first dissipative term to be evaluated directly. We have
\[
    L_\alpha\rho L_\beta^\dagger
    =
    \ell_\alpha{}^\mu
    \overline{\ell_\beta{}^\nu}
    \rho^\sigma
    h_\mu h_\sigma h_\nu.
\]
First,
\[
    h_\mu h_\sigma
    =
    \frac12
    \left(
    B_{\mu\sigma}{}^p
    +
    C_{\mu\sigma}{}^p
    \right)h_p.
\]
Multiplying the result from the right by \(h_\nu\), and using
\[
    h_p h_\nu
    =
    \frac12
    \left(
    B_{\nu p}{}^\delta
    -
    C_{\nu p}{}^\delta
    \right)h_\delta,
\]
we obtain
\begin{equation}\label{eq:LrhoL-BC-direct-expanded}
    L_\alpha\rho L_\beta^\dagger
    =
    \frac14
    \ell_\alpha{}^\mu
    \overline{\ell_\beta{}^\nu}
    \rho^\sigma
    \left(
    B_{\mu\sigma}{}^p
    +
    C_{\mu\sigma}{}^p
    \right)
    \left(
    B_{\nu p}{}^\delta
    -
    C_{\nu p}{}^\delta
    \right)
    h_\delta.
\end{equation}
The ordered product in the anticommutator is
\[
    L_\beta^\dagger L_\alpha
    =
    \ell_\alpha{}^\mu
    \overline{\ell_\beta{}^\nu}
    h_\nu h_\mu
    =
    \frac12
    \ell_\alpha{}^\mu
    \overline{\ell_\beta{}^\nu}
    \left(B_{\mu\nu}{}^m-C_{\mu\nu}{}^m\right)h_m.
\]
Therefore,
\begin{equation}\label{eq:anticommutator-short}
    \frac12\acom{L_\beta^\dagger L_\alpha}{\rho}
    =
    \frac14
    \ell_\alpha{}^\mu
    \overline{\ell_\beta{}^\nu}
    \rho^\sigma
    \left(B_{\mu\nu}{}^m-C_{\mu\nu}{}^m\right)
    B_{m\sigma}{}^\delta h_\delta.
\end{equation}

Equation~\eqref{eq:LrhoL-BC-direct-expanded} follows directly from the ordered product identities \(B\pm C\). An independent derivation based on the complete decomposition into Lie and Jordan superoperator blocks is presented in Appendix \ref{App:SuperoperatorBlocks}. There, the closure
relations of Sec \ref{SubSec:Super-operators} are used to recover exactly the same factorized expression.
%

\subsection{Universal elementary dissipative map and physical kernel}

The previous calculation reveals that every elementary
dissipative contribution appearing in the GKLS equation
possesses the same algebraic structure,
independently of the particular form of the Lindblad operators.
This observation motivates the introduction of a universal
trilinear dissipative map
\begin{equation}\label{eq:universal-D-map}
    \mathcal D(X,Y)Z
    :=
    XZY
    -
    \frac12\acom{YX}{Z},
\end{equation}
for three independent operators \(X,Y,Z\in\mathcal B(\mathcal H)\).
Equivalently, one may regard
\[
    \mathcal D(X,Y,Z)
    :=
    \mathcal D(X,Y)Z
\]
as a complex-trilinear map
\[
    \mathcal D:
    \mathcal B(\mathcal H)
    \times
    \mathcal B(\mathcal H)
    \times
    \mathcal B(\mathcal H)
    \longrightarrow
    \mathcal B(\mathcal H).
\]
For fixed \(X\) and \(Y\), the map \(Z\mapsto\mathcal D(X,Y)Z\) is a linear endomorphism of the operator space. The trilinearity refers to the independent arguments \(X,Y,Z\),
after the physical identification \(Y=L_\beta^\dagger\), the dependence on the coefficients of \(L_\beta\) is conjugate linear.

The physical dissipative block introduced in \eqref{eq:D-alpha-beta} is obtained by setting
\[
    X=L_\alpha,
    \qquad
    Y=L_\beta^\dagger,
\]
namely,
\begin{equation}\label{eq:physical-D-operator}
    \mathfrak{D}_{\alpha\beta}(Z)
    =
    \mathcal D(L_\alpha,L_\beta^\dagger)Z.
\end{equation}

For basis elements, we define the universal elementary maps
\begin{equation}\label{eq:elementary-D-munu}
    \mathcal D_{\mu\nu}(Z)
    :=
    \mathcal D(h_\mu,h_\nu)Z
    =
    h_\mu Z h_\nu
    -
    \frac12\acom{h_\nu h_\mu}{Z}.
\end{equation}
Using
\[
    L_\alpha
    =
    \ell_\alpha{}^\mu h_\mu,
    \qquad
    L_\beta^\dagger
    =
    \overline{\ell_\beta{}^\nu}\,h_\nu,
\]
the physical block becomes
\begin{equation}\label{eq:physical-D-from-universal}
    \mathfrak{D}_{\alpha\beta}(Z)
    =
    \ell_\alpha{}^\mu
    \overline{\ell_\beta{}^\nu}
    \mathcal D_{\mu\nu}(Z).
\end{equation}

The components of the universal map are defined by
\begin{equation}\label{eq:universal-D-components-def}
    \mathcal D_{\mu\nu}(h_\sigma)
    =
    \mathcal D_{\mu\nu\sigma}{}^\delta h_\delta.
\end{equation}
Combining \eqref{eq:LrhoL-BC-direct-expanded} and
\eqref{eq:anticommutator-short}, one obtains
\begin{equation}\label{eq:universal-D-components}
    \mathcal D_{\mu\nu\sigma}{}^\delta
    =
    \frac14
    \left[
    \left(
    B_{\mu\sigma}{}^p
    +
    C_{\mu\sigma}{}^p
    \right)
    \left(
    B_{\nu p}{}^\delta
    -
    C_{\nu p}{}^\delta
    \right)
    -
    \left(
    B_{\mu\nu}{}^m
    -
    C_{\mu\nu}{}^m
    \right)
    B_{m\sigma}{}^\delta
    \right].
\end{equation}

The associated kernel, resolved in terms of the Lindblad operator, is
\begin{equation}\label{eq:frakD-final}
    \mathfrak{D}_{\alpha\beta\,\sigma}{}^\delta
    =
    \ell_\alpha{}^\mu
    \overline{\ell_\beta{}^\nu}
    \mathcal D_{\mu\nu\sigma}{}^\delta.
\end{equation}
Consequently,
\[
    \mathfrak{D}_{\alpha\beta}(\rho)
    =
    \rho^\sigma
    \mathfrak{D}_{\alpha\beta\,\sigma}{}^\delta h_\delta.
\]

The full component representation of the GKLS equation is therefore
\begin{equation}\label{eq:full-GKLS-components}
    \dot\rho^\delta
    =
    -\frac{\mathrm{i}}{\hbar}
    H^\mu C_{\mu\sigma}{}^\delta\rho^\sigma
    +
    \gamma^{\alpha\beta}
    \mathfrak{D}_{\alpha\beta\,\sigma}{}^\delta\rho^\sigma.
\end{equation}

Equations from \eqref{eq:universal-D-map} to \eqref{eq:full-GKLS-components} disentangle the various structural layers of the dissipative dynamics. The map \(\mathcal D(X,Y,Z)\) is universal and independent of any specific basis, \(\mathcal D_{\mu\nu\sigma}{}^\delta\) encodes its components in operator space, the coefficients \(\ell_\alpha{}^\mu\) fix the Lindblad operators, and the Kossakowski matrix \(\gamma^{\alpha\beta}\) carries out the final physically meaningful contraction.

\begin{proposition}[Universal trace constraint of the dissipative map]
For arbitrary operators \(X,Y,Z\in\mathcal B(\mathcal H)\), the universal dissipative map satisfies
\begin{equation}\label{eq:universal-trace-constraint}
    \operatorname{Tr}\!\left[
    \mathcal D(X,Y)Z
    \right]
    =
    0.
\end{equation}
Therefore, for every fixed pair \(X,Y\), the image of \(\mathcal D(X,Y)\) is contained in the traceless operator subspace.
\end{proposition}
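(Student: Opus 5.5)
The plan is to compute the trace of $\mathcal D(X,Y)Z$ directly from the definition \eqref{eq:universal-D-map}, using only linearity and cyclicity of the trace on $\mathcal B(\mathcal H)\simeq M_d(\mathbb C)$. Since everything is finite-dimensional, all traces are finite and cyclicity $\operatorname{Tr}(ABC)=\operatorname{Tr}(BCA)$ holds for arbitrary matrices, so no structural assumptions on $X,Y,Z$ are needed.

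\emph{First step.} Split the map into its two pieces:
\[
\operatorname{Tr}\!\left[\mathcal D(X,Y)Z\right]=\operatorname{Tr}(XZY)-\tfrac12\operatorname{Tr}(YXZ)-\tfrac12\operatorname{Tr}(ZYX).
\]
\emph{Second step.} Apply cyclicity to each term:
\[
\operatorname{Tr}(XZY)=\operatorname{Tr}(YXZ),\qquad \operatorname{Tr}(ZYX)=\operatorname{Tr}(YXZ).
\]
The three contributions then combine as $1-\tfrac12-\tfrac12=0$, which gives \eqref{eq:universal-trace-constraint}.

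\emph{Third step.} The second assertion follows immediately. Fix $X,Y$. Every element of the image of $Z\mapsto\mathcal D(X,Y)Z$ has zero trace, so the image lies in the traceless subspace. In the basis with $h_0=\mathbb I/\sqrt d$, this subspace is spanned by $h_1,\dots,h_{d^2-1}$.

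As a componentwise check, one may note that the statement is equivalent to $\mathcal D_{\mu\nu\sigma}{}^0=0$. Indeed, taking $X=h_\mu$, $Y=h_\nu$, $Z=h_\sigma$ and using $\operatorname{Tr}(X)=\sqrt d\,X^0$, the trace identity forces the $0$-component to vanish. This could be cross-checked against \eqref{eq:universal-D-components} using $C_{\mu\nu}{}^0=0$ and $B_{\mu\nu}{}^0=\tfrac{2}{\sqrt d}\delta_{\mu\nu}$, but this is not needed for the proof.

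There is no genuine obstacle here. The only point requiring care is to apply cyclicity in the correct direction, so that both anticommutator terms reduce to the same cyclic class as $XZY$; note that the product order $YX$, rather than $XY$, is exactly what makes this work.
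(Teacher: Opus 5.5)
Your proof is correct and follows the paper's argument exactly: expand the trace into $\operatorname{Tr}(XZY)-\tfrac12\operatorname{Tr}(YXZ)-\tfrac12\operatorname{Tr}(ZYX)$, use cyclicity to reduce all three terms to $\operatorname{Tr}(YXZ)$, and read off the traceless-image statement. Your remark that $\mathcal D_{\mu\nu\sigma}{}^0=0$ follows from the identity via $\operatorname{Tr}(X)=\sqrt d\,X^0$ matches the paper's subsequent component discussion.
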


\begin{proof}
From the definition of the universal dissipative map,
\[
    \mathcal D(X,Y)Z
    =
    XZY
    -
    \frac12\acom{YX}{Z},
\]
we obtain
\begin{align*}
    \operatorname{Tr}\!\left[
    \mathcal D(X,Y)Z
    \right]
    &=
    \operatorname{Tr}(XZY)
    -
    \frac12\operatorname{Tr}(YXZ)
    -
    \frac12\operatorname{Tr}(ZYX).
\end{align*}
By cyclicity of the trace,
\[
    \operatorname{Tr}(XZY)
    =
    \operatorname{Tr}(YXZ),
    \qquad
    \operatorname{Tr}(ZYX)
    =
    \operatorname{Tr}(YXZ).
\]
Consequently,
\begin{align*}
    \operatorname{Tr}\!\left[
    \mathcal D(X,Y)Z
    \right]
    &=
    \operatorname{Tr}(YXZ)
    -
    \frac12\operatorname{Tr}(YXZ)
    -
    \frac12\operatorname{Tr}(YXZ)
    \\
    &=0.
\end{align*}
\end{proof}

\paragraph{Component representation.}
Once a Hilbert-Schmidt orthonormal basis is chosen, with
\[
    h_0=\frac{\mathbb I}{\sqrt d},
    \qquad
    \operatorname{Tr}(h_0)=\sqrt d,
    \qquad
    \operatorname{Tr}(h_a)=0
    \quad
    (a=1,\ldots,d^2-1),
\]
the universal elementary maps satisfy
\(
    \mathcal D_{\mu\nu}(h_\sigma)
    =
    \mathcal D_{\mu\nu\sigma}{}^\delta h_\delta.
\)
Taking the trace and using \eqref{eq:universal-trace-constraint}, one finds
\[
    0
    =
    \operatorname{Tr}
    \left[
    \mathcal D_{\mu\nu}(h_\sigma)
    \right]
    =
    \sqrt d\,
    \mathcal D_{\mu\nu\sigma}{}^0.
\]
Hence, \(\mathcal D_{\mu\nu\sigma}{}^0 = 0\), now contracting with the Lindblad-operator coefficients gives
\[
    \mathfrak{D}_{\alpha\beta\,\sigma}{}^0
    =
    \ell_\alpha{}^\mu
    \overline{\ell_\beta{}^\nu}
    \mathcal D_{\mu\nu\sigma}{}^0
    =
    0.
\]
Equivalently, since
\(
    \mathfrak{D}_{\alpha\beta}(Z)
    =
    \mathcal D(L_\alpha,L_\beta^\dagger)Z,
\)
the physical elementary blocks satisfy
\begin{equation*}
    \operatorname{Tr}\!\left[
    \mathfrak{D}_{\alpha\beta}(Z)
    \right]
    =
    0
\end{equation*}
for every \(Z\in\mathcal B(\mathcal H)\), and in particular for \(Z=\rho\).

\begin{remark}
An individual cross block \(\mathfrak{D}_{\alpha\beta}\) need not preserve Hermiticity when \(\alpha\neq\beta\). Hermiticity preservation is a property of the full contraction \(\gamma^{\alpha\beta}\mathfrak{D}_{\alpha\beta}\),  as a consequence of the Hermiticity of the Kossakowski matrix \(\gamma^{\alpha\beta}\).
\end{remark}

\section{Internal left-right Lie-Jordan transports}\label{Sec:InternalTransports}

The combinations \(B+C\) and \(B-C\) appearing in the universal dissipative tensor admit a direct algebraic interpretation: they represent, respectively, left and right multiplication in the associative operator algebra. To encode these two operations compactly, we introduce the internal transport operators
\begin{subequations}\label{eq:nabla-plus-minus}
\begin{align}
    \nabla^{(+)}_\mu h_\nu
    &:=
    \Gamma^{(+)}_{\mu\nu}{}^\sigma h_\sigma,
    &
    \Gamma^{(+)}_{\mu\nu}{}^\sigma
    &:=
    B_{\mu\nu}{}^\sigma
    +
    C_{\mu\nu}{}^\sigma,
    \label{eq:nabla-plus}
    \\
    \nabla^{(-)}_\mu h_\nu
    &:=
    \Gamma^{(-)}_{\mu\nu}{}^\sigma h_\sigma,
    &
    \Gamma^{(-)}_{\mu\nu}{}^\sigma
    &:=
    B_{\mu\nu}{}^\sigma
    -
    C_{\mu\nu}{}^\sigma.
    \label{eq:nabla-minus}
\end{align}
\end{subequations}
They are extended by linearity to every
\(Z=Z^\sigma h_\sigma\in\mathcal B(\mathcal H)\).

\begin{remark}[Algebraic and geometric interpretation]
The coefficients \(\Gamma^{(+)}_{\mu\nu}{}^\sigma\) and \(\Gamma^{(-)}_{\mu\nu}{}^\sigma\) encode two internal algebraic transports. As shown below, the corresponding operators coincide with twice the left and right regular actions, respectively. When \(\mathcal B(\mathcal H)\) is regarded as an affine manifold, these prescriptions admit a standard Leibniz extension to two linear  connections.
This geometric extension is made explicit below.
\end{remark}

\begin{proposition}[Left and right regular actions]
For every \(Z\in\mathcal B(\mathcal H)\),
\begin{subequations}\label{eq:left-right-actions}
\begin{align}
    \nabla^{(+)}_\mu Z
    &=
    2h_\mu Z,
    \label{eq:left-action}
    \\
    \nabla^{(-)}_\mu Z
    &=
    2Zh_\mu.
\label{eq:right-action}
\end{align}
\end{subequations}
Thus, \(\nabla^{(+)}_\mu\) and \(\nabla^{(-)}_\mu\) are twice the left and right regular actions of \(h_\mu\), respectively.
\end{proposition}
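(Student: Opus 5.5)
The plan is to verify both identities on the basis elements \(h_\nu\) and then extend by linearity, since \(\nabla^{(\pm)}_\mu\) were defined on basis elements and extended linearly to \(Z=Z^\nu h_\nu\).

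For the left action, I add the defining relations \eqref{eq:C-definition} and \eqref{eq:B-definition}:
\[
    \Gamma^{(+)}_{\mu\nu}{}^\sigma h_\sigma
    =(B_{\mu\nu}{}^\sigma+C_{\mu\nu}{}^\sigma)h_\sigma
    =\acom{h_\mu}{h_\nu}+\comm{h_\mu}{h_\nu}
    =2h_\mu h_\nu .
\]
This is just \eqref{eq:hmu-hnu-BC} multiplied by two. For the right action, I subtract the relations instead:
\[
    \Gamma^{(-)}_{\mu\nu}{}^\sigma h_\sigma
    =\acom{h_\mu}{h_\nu}-\comm{h_\mu}{h_\nu}
    =2h_\nu h_\mu ,
\]
which is \eqref{eq:hnu-hmu-BC} multiplied by two. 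Hence \(\nabla^{(+)}_\mu h_\nu=2h_\mu h_\nu\) and \(\nabla^{(-)}_\mu h_\nu=2h_\nu h_\mu\).

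For general \(Z=Z^\nu h_\nu\), linearity gives
\[
    \nabla^{(+)}_\mu Z=Z^\nu\,2h_\mu h_\nu=2h_\mu Z,
    \qquad
    \nabla^{(-)}_\mu Z=Z^\nu\,2h_\nu h_\mu=2Zh_\mu .
\]
Since the coefficients \(Z^\nu\) are scalars, they pass through the associative product on either side. This holds for complex coefficients as well, because the extension is complex-linear.

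The only point needing care is the index order in \(\Gamma^{(-)}\). The contraction \(\Gamma^{(-)}_{\mu\nu}{}^\sigma h_\sigma\) puts \(h_\mu\) as the first slot of \(B\) and \(C\), but the resulting product is \(h_\nu h_\mu\), with \(h_\mu\) multiplying from the right. The antisymmetry \(C_{\mu\nu}{}^\sigma=-C_{\nu\mu}{}^\sigma\) is what flips the ordering. I would check this sign explicitly to confirm that \(\nabla^{(-)}_\mu\) is the right regular action, consistent with the \((B-C)\) factor in \eqref{eq:LrhoL-BC-direct-expanded}.
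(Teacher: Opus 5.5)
Your proof is correct and essentially the paper's argument: the paper writes \(\nabla^{(\pm)}_\mu=\widetilde e_\mu\pm e_\mu\), using the matrix representations \(B_{\mu\nu}{}^\lambda\) and \(C_{\mu\nu}{}^\lambda\), and expands \(\{h_\mu,Z\}\pm[h_\mu,Z]\) directly on an arbitrary \(Z\), which is the same computation you carry out on basis elements before extending by linearity. Your remark on the index order is fine, though the reversal to \(h_\nu h_\mu\) already comes straight out of the subtraction \(\{h_\mu,h_\nu\}-[h_\mu,h_\nu]\), so no separate appeal to the antisymmetry of \(C\) is needed.
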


\begin{proof}
Using the definitions of the Lie and Jordan superoperators,
\begin{align*}
    \nabla^{(+)}_\mu Z
    &=
    \widetilde e_\mu(Z)+e_\mu(Z)
    \\
    &=
    \{h_\mu,Z\}+[h_\mu,Z]
    \\
    &=
    (h_\mu Z+Zh_\mu)+(h_\mu Z-Zh_\mu)
    \\
    &=
    2h_\mu Z.
\end{align*}
Similarly,
\begin{align*}
    \nabla^{(-)}_\mu Z
    &=
    \widetilde e_\mu(Z)-e_\mu(Z)
    \\
    &=
    \{h_\mu,Z\}-[h_\mu,Z]
    \\
    &=
    (h_\mu Z+Zh_\mu)-(h_\mu Z-Zh_\mu)
    \\
    &=
    2Zh_\mu.
\end{align*}
\end{proof}

Because \(B_{\mu\nu}{}^\sigma\) is symmetric and \(C_{\mu\nu}{}^\sigma\) is antisymmetric in their lower indices,
\[
\Gamma^{(-)}_{\mu\nu}{}^\sigma
=
\Gamma^{(+)}_{\nu\mu}{}^\sigma.
\]
The Jordan and Lie actions can therefore be recovered as
\begin{subequations}
\begin{align}
    \nabla^{(J)}_\mu
    &:=
    \frac12
    \left(
    \nabla^{(+)}_\mu+\nabla^{(-)}_\mu
    \right)
    =
    \widetilde e_\mu,
    \label{eq:nabla-J}
    \\
    \nabla^{(C)}_\mu
    &:=
    \frac12
    \left(
    \nabla^{(+)}_\mu-\nabla^{(-)}_\mu
    \right)
    =
    e_\mu.
    \label{eq:nabla-C}
\end{align}
\end{subequations}
Accordingly,
\[
    \nabla^{(J)}_\mu Z
    =
    \{h_\mu,Z\},
    \qquad
    \nabla^{(C)}_\mu Z
    =
    [h_\mu,Z].
\]

We also introduce the ordered Jordan correction
\begin{equation}\label{eq:anchor-D-def}
    \nabla^{(-J)}_{\mu\nu}
    :=
    \Gamma^{(-)}_{\mu\nu}{}^m
    \nabla^{(J)}_m.
\end{equation}
The term \emph{ordered} refers to the identity
\[
    \Gamma^{(-)}_{\mu\nu}{}^m h_m
    =
    2h_\nu h_\mu.
\]
Acting on an arbitrary operator \(Z\), one obtains
\begin{equation}\label{eq:ordered-Jordan-action}
    \nabla^{(-J)}_{\mu\nu}Z
    =
    \Gamma^{(-)}_{\mu\nu}{}^m
    \{h_m,Z\}
    =
    \left\{
    \Gamma^{(-)}_{\mu\nu}{}^m h_m,
    Z
    \right\}
    =2\{h_\nu h_\mu,Z\}.
\end{equation}

\subsection{Transport representation of the universal dissipative map}

The universal dissipative map was introduced in \eqref{eq:universal-D-map}. On the Hilbert-Schmidt basis, its elementary maps are
\[
    \mathcal D_{\mu\nu}(Z)
    =
    \mathcal D(h_\mu,h_\nu)Z
    =
    h_\mu Z h_\nu
    -
    \frac12\{h_\nu h_\mu,Z\}.
\]
The internal transport operators provide a direct representation of its two contributions. From  \eqref{eq:left-right-actions},
\[
    \frac14
    \nabla^{(-)}_\nu
    \nabla^{(+)}_\mu Z
    =
    h_\mu Z h_\nu,
\]
whereas  \eqref{eq:ordered-Jordan-action} gives
\[
    \frac14
    \nabla^{(-J)}_{\mu\nu}Z
    =
    \frac12\{h_\nu h_\mu,Z\}.
\]
Consequently,
\begin{equation}\label{eq:internal-D-operator-def}
    \mathcal D_{\mu\nu}(Z)
    =
    \frac14
    \left(
    \nabla^{(-)}_\nu\nabla^{(+)}_\mu
    -
    \nabla^{(-J)}_{\mu\nu}
    \right)Z.
\end{equation}

Equation \eqref{eq:internal-D-operator-def} provides the transport-form representation of the universal elementary dissipative map. In this formulation, the first term is interpreted as a left-right bimodule action, while the second term corresponds to the associated ordered Jordan correction. The operator-space components and their contraction with the Lindblad operator coefficients were derived in \eqref{eq:universal-D-components} and \eqref{eq:frakD-final}, respectively.

\subsection{Associative compatibility of the left and right transports}

\begin{proposition}[Mixed left-right compatibility]
The left and right transport operators commute:
\begin{equation}\label{eq:mixed-transport-commutator-zero}
    \left[
    \nabla^{(-)}_\nu,
    \nabla^{(+)}_\mu
    \right]
    =
    0
\end{equation}
for all \(\mu,\nu\).
\end{proposition}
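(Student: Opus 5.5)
The proof rests on the preceding proposition on left and right regular actions, \eqref{eq:left-right-actions}: $\nabla^{(+)}_\mu Z=2h_\mu Z$ and $\nabla^{(-)}_\nu Z=2Zh_\nu$ for every $Z\in\mathcal B(\mathcal H)$. With these, the commutator reduces to the associativity of the operator product. Fix an arbitrary $Z$ and evaluate both orderings:
\[
    \nabla^{(-)}_\nu\nabla^{(+)}_\mu Z
    =
    2\left(2h_\mu Z\right)h_\nu
    =
    4\,h_\mu Z h_\nu,
    \qquad
    \nabla^{(+)}_\mu\nabla^{(-)}_\nu Z
    =
    2h_\mu\left(2Z h_\nu\right)
    =
    4\,h_\mu Z h_\nu .
\]
The two expressions agree because $(h_\mu Z)h_\nu=h_\mu(Zh_\nu)$. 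Their difference therefore vanishes for every $Z$, which gives \eqref{eq:mixed-transport-commutator-zero} as an identity of superoperators on $\mathcal B(\mathcal H)$.

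The only point needing care is applying \eqref{eq:left-right-actions} to the intermediate operators $2h_\mu Z$ and $2Zh_\nu$. This is legitimate because the transports were extended by linearity to all of $\mathcal B(\mathcal H)$ and the proposition holds for arbitrary $Z$.

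As an optional component form, the same statement can be written as a quadratic identity in the structure tensors. Acting on $h_\sigma$ and using \eqref{eq:nabla-plus-minus}, the condition becomes
\[
    \Gamma^{(+)}_{\mu\sigma}{}^{p}\,\Gamma^{(-)}_{\nu p}{}^{\delta}
    =
    \Gamma^{(-)}_{\nu\sigma}{}^{p}\,\Gamma^{(+)}_{\mu p}{}^{\delta}.
\]
Proving this directly from $B$ and $C$ would require Jacobi-type and Jordan–Lie compatibility identities, such as \eqref{eq:comm-derivation-anticomm} and the closure relations. That route is the only potential obstacle, and it can be avoided entirely: the index identity follows from the operator-level argument, since both sides are the $h_\delta$-components of $4h_\mu h_\sigma h_\nu$ computed in the two bracketings via \eqref{eq:ordered-products-BC}.
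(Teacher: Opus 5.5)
Your proof is correct and matches the paper's argument essentially step for step. Both use the left and right regular actions $\nabla^{(+)}_\mu Z=2h_\mu Z$ and $\nabla^{(-)}_\nu Z=2Zh_\nu$ to reduce the commutator to the associativity identity $(h_\mu Z)h_\nu=h_\mu(Zh_\nu)$, and your optional component form is exactly the paper's associativity identity \eqref{eq:BC-associativity-identity}, which the paper likewise reads off from this operator-level statement.
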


\begin{proof}
For every \(Z\in\mathcal B(\mathcal H)\),
\begin{align*}
    \nabla^{(-)}_\nu\nabla^{(+)}_\mu Z
    &=
    4(h_\mu Z)h_\nu,
    \\
    \nabla^{(+)}_\mu\nabla^{(-)}_\nu Z
    &=
    4h_\mu(Zh_\nu).
\end{align*}
Associativity of the operator product implies
\[
    (h_\mu Z)h_\nu
    =
    h_\mu(Zh_\nu).
    \]
    Therefore,
    \[
    \nabla^{(-)}_\nu\nabla^{(+)}_\mu Z
    =
    \nabla^{(+)}_\mu\nabla^{(-)}_\nu Z
\]
for every \(Z\in\mathcal B(\mathcal H)\), which proves \eqref{eq:mixed-transport-commutator-zero}.
\end{proof}

In components, \eqref{eq:mixed-transport-commutator-zero} is equivalent to the associativity identity
\begin{equation}\label{eq:BC-associativity-identity}
\left(
B_{\mu\sigma}{}^p+C_{\mu\sigma}{}^p
\right)
\left(
B_{\nu p}{}^\delta-C_{\nu p}{}^\delta
\right)
=
\left(
B_{\nu\sigma}{}^p-C_{\nu\sigma}{}^p
\right)
\left(
B_{\mu p}{}^\delta+C_{\mu p}{}^\delta
\right).
\end{equation}

As a consequence of Eq.~\eqref{eq:BC-associativity-identity}, the transport representation of the elementary dissipator admits the equivalent symmetrized form
\begin{equation}\label{eq:D-symmetrized}
\mathcal D_{\mu\nu}(Z)
=
\frac18
\left(
\nabla^{(-)}_\nu\nabla^{(+)}_\mu
+
\nabla^{(+)}_\mu\nabla^{(-)}_\nu
-
2\nabla^{(-J)}_{\mu\nu}
\right)Z.
\end{equation}

\subsection{Tensorial character}

The complex trilinearity of the universal map \(\mathcal D(X,Y,Z)\), introduced in \eqref{eq:universal-D-map}, implies that its components \(\mathcal D_{\mu\nu\sigma}{}^\delta\) transform as those of a \((3,1)\)-tensor under changes of operator basis.

Let
\[
    h'_{\mu'}
    =
    R_{\mu'}{}^\mu h_\mu
\]
be another operator basis, with inverse transformation
\[
    h_\mu
    =
    (R^{-1})_\mu{}^{\mu'}h'_{\mu'}.
\]
If
\[
    \mathcal D(h_\mu,h_\nu)h_\sigma
    =
    \mathcal D_{\mu\nu\sigma}{}^\delta h_\delta,
\]
then the components in the transformed basis satisfy
\begin{equation}\label{eq:D-tensor-transformation}
    \mathcal D'_{\mu'\nu'\sigma'}{}^{\delta'}
    =
    R_{\mu'}{}^\mu
    R_{\nu'}{}^\nu
    R_{\sigma'}{}^\sigma
    (R^{-1})_\delta{}^{\delta'}
    \mathcal D_{\mu\nu\sigma}{}^\delta.
\end{equation}
If both bases are Hermitian and Hilbert-Schmidt orthonormal, then \(R\) is a real orthogonal matrix.

The physical kernel \eqref{eq:physical-D-from-universal} contains two distinct types of indices. The indices \(\mu,\nu,\sigma,\delta\) belong to the operator space, whereas \(\alpha,\beta\) label the Lindblad operators. Thus, the operator-space tensorial structure is carried by \(\mathcal D_{\mu\nu\sigma}{}^\delta\), while \(\mathfrak{D}_{\alpha\beta\,\sigma}{}^\delta\) results from its contraction
with the model dependent Lindblad-operator coefficients.

\subsection{Affine geometric interpretation}

The complex vector space \(\mathcal M := \mathcal B(\mathcal H)\) can be viewed as a complex affine manifold, where each tangent space is naturally identified with \(\mathcal B(\mathcal H)\), i.e., \(T_Z\mathcal M \simeq \mathcal B(\mathcal H)\) for all \(Z \in \mathcal M\). By associating each basis element \(h_\mu\) with the corresponding constant vector field, the Hilbert–Schmidt basis induces a global frame on \(\mathcal M\).

Given a vector field \(Y = Y^\nu h_\nu\), the algebraic rules in \eqref{eq:nabla-plus-minus} extend to a Leibniz-type action
\begin{equation}\label{eq:affine-connection-extension}
    \nabla^{(\pm)}_{h_\mu}Y
    :=
    h_\mu(Y^\sigma)h_\sigma
    +
    Y^\nu
    \Gamma^{(\pm)}_{\mu\nu}{}^\sigma h_\sigma,
\end{equation}
extended \(C^\infty\)-linearly in the lower argument. In this expression, \(h_\mu(Y^\sigma)\) represents the directional derivative of the coefficient function \(Y^\sigma\) in the direction of the constant vector field \(h_\mu\). The formula \eqref{eq:affine-connection-extension} thereby specifies two linear connections with coefficients 
\(\Gamma^{(+)}_{\mu\nu}{}^\sigma\) and \(\Gamma^{(-)}_{\mu\nu}{}^\sigma\) in the global Hilbert–Schmidt frame. When acting on constant vector fields, the derivative term disappears, and these connections reduce to the left and right regular actions given in Eq.~\eqref{eq:left-right-actions}.

For fixed \(X,Y\in\mathcal B(\mathcal H)\), the endomorphism
\[
    Z
    \longmapsto
    \mathcal D(X,Y)Z
\]
defines a constant \((1,1)\)-tensor field on the affine operator space. Equation \eqref{eq:internal-D-operator-def} expresses its elementary basis components through the two internal transports and the ordered Jordan correction.

The physical state space
\[
    \mathcal S(\mathcal H)
    =
    \left\{
    \rho\in\operatorname{Herm}(\mathcal H)
    \;\middle|\;
    \rho\geq0,\;
    \operatorname{Tr}\rho=1
    \right\}
\]
is a convex subset of the trace one Hermitian affine hyperplane. Its boundary is stratified by the rank of the density operator. The full GKLS vector field is tangent to the trace one Hermitian affine hyperplane and preserves the positive state space.

\subsection{Dictionary with the algebraic formulation of Kunold et al article}

Ref \cite{Bixano:2026qce} introduces the real Lie constants \(c_{\mu\nu\lambda}\), the Jordan constants \(b_{\mu\nu\lambda}\), and the superoperators
\[
    Z_\mu
    =
    B_\mu+\mathrm{i} C_\mu,
    \qquad
    X_{\mu\nu}
    =
    \frac14Z_\mu Z_\nu^*.
\]
Its convention
\[
    [h_\mu,h_\nu]
    =
    \mathrm{i} c_{\mu\nu}{}^\lambda h_\lambda
\]
is related to the convention used here by
\[
    C_{\mu\nu}{}^\lambda
    =
    \mathrm{i} c_{\mu\nu}{}^\lambda,
    \qquad
    B_{\mu\nu}{}^\lambda
    =
    b_{\mu\nu}{}^\lambda.
\]

\begin{table}[t]
    \centering
    \small
    \renewcommand{\arraystretch}{1.35}
\begin{tabular}{|c|c|c|}
    \hline
    \textbf{Present notation}
    &
    \textbf{Ref.}~\cite{Bixano:2026qce}
    &
    \textbf{Meaning}
    \\
    \hline
    \(B_{\mu\nu}{}^\lambda\)
    &
    \(b_{\mu\nu\lambda}\)
    &
    Jordan structure constants
    \\
    \hline
    \(C_{\mu\nu}{}^\lambda\)
    &
    \(\mathrm{i} c_{\mu\nu\lambda}\)
    &
    Lie structure constants in the Hermitian-basis convention
    \\
    \hline
    \(\nabla^{(+)}_\mu\)
    &
    \(Z_\mu^*\)
    &
    Twice the left action:
    \(Z\mapsto2h_\mu Z\)
    \\
    \hline
    \(\nabla^{(-)}_\nu\)
    &
    \(Z_\nu\)
    &
    Twice the right action:
    \(Z\mapsto2Zh_\nu\)
    \\
    \hline
    \(\frac14\nabla^{(-)}_\nu\nabla^{(+)}_\mu\)
    &
    \(X_{\nu,\mu}
    =\frac14Z_\nu Z_\mu^*\)
    &
    Bimodule action:
    \(Z\mapsto h_\mu Z h_\nu\)
    \\
    \hline
    \(\nabla^{(-J)}_{\mu\nu}\)
    &
    \(z_{\nu\mu p}B_p\)
    &
    Ordered Jordan correction
    \\
    \hline
    \(\mathcal D_{\mu\nu}\)
    &
    \(\frac14
    \left(
    Z_\nu Z_\mu^*
    -
    z_{\nu\mu p}B_p
    \right)\)
    &
    Universal elementary dissipative map
    \\
    \hline
    \(\gamma^{\alpha\beta}\mathfrak{D}_{\alpha\beta}\)
    &
    \(\mathcal{L}_L\)
    &
    Full dissipative Liouville generator
    \\
    \hline
    \end{tabular}
    \caption{Dictionary between the Lie-Jordan transport notation used here and the algebraic superoperator notation of Ref \cite{Bixano:2026qce}. The reversal \((\mu,\nu)\mapsto(\nu,\mu)\) in \(X_{\nu,\mu}\) follows from the ordered action \(h_\mu Z h_\nu\).}
    \label{tab:Kunold-Bixano-dictionary}
\end{table}

In particular, the commutation relation
\[
    [Z_\nu,Z_\mu^*]
    =
    0
\]
obtained in \cite{Bixano:2026qce} is exactly the same as the left-right compatibility condition
\[
    [\nabla^{(-)}_\nu,\nabla^{(+)}_\mu]
    =
    0
\]
established in \eqref{eq:mixed-transport-commutator-zero}. Consequently, the two approaches capture an identical universal dissipative structure: the original paper formulates it in terms of structure-constant supermatrices, while the present treatment employs Lie–Jordan tensors together with internal left-right transports.

\section{Structural properties of the universal dissipative map}\label{Sec:StructuralPropertiesDissipativeMap}

The universal trace constraint established in \eqref{eq:universal-trace-constraint} shows that the image of every elementary dissipative map lies in the traceless operator subspace. We now derive the complementary conjugation, Hermiticity, reality, and Hilbert-Schmidt duality properties of the same universal structure.

\subsection{Hermitian conjugation of the universal map}

\begin{proposition}[Hermitian conjugation]
For arbitrary operators \(X,Y,Z\in\mathcal B(\mathcal H)\), the universal dissipative map satisfies
\begin{equation}\label{eq:universal-Hermitian-conjugation}
    \left[
    \mathcal D(X,Y)Z
    \right]^\dagger
    =
    \mathcal D(Y^\dagger,X^\dagger)Z^\dagger.
\end{equation}
\end{proposition}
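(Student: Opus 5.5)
The plan is a direct computation from the definition \eqref{eq:universal-D-map}, using only two elementary facts about the adjoint: it reverses products, $(ABC)^\dagger=C^\dagger B^\dagger A^\dagger$, and it is additive and conjugate-linear. Since the coefficients $1$ and $-\tfrac12$ in \eqref{eq:universal-D-map} are real, conjugate-linearity introduces no extra factors. No basis expansion or structure tensors are needed, so the identity is basis independent.

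First we take the adjoint of the sandwich term, obtaining $(XZY)^\dagger=Y^\dagger Z^\dagger X^\dagger$. Next we treat the anticommutator term, writing $\acom{YX}{Z}=YXZ+ZYX$. Its adjoint is
\[
(YXZ)^\dagger+(ZYX)^\dagger=Z^\dagger X^\dagger Y^\dagger+X^\dagger Y^\dagger Z^\dagger .
\]
This sum equals $\acom{X^\dagger Y^\dagger}{Z^\dagger}$, because the anticommutator is symmetric in its two arguments.

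Combining the two pieces gives
\[
\left[\mathcal D(X,Y)Z\right]^\dagger=Y^\dagger Z^\dagger X^\dagger-\tfrac12\acom{X^\dagger Y^\dagger}{Z^\dagger}.
\]
We then compare this with the definition evaluated at $(X',Y',Z')=(Y^\dagger,X^\dagger,Z^\dagger)$:
\[
\mathcal D(Y^\dagger,X^\dagger)Z^\dagger=Y^\dagger Z^\dagger X^\dagger-\tfrac12\acom{X^\dagger Y^\dagger}{Z^\dagger},
\]
since $Y'X'=X^\dagger Y^\dagger$. The two expressions coincide term by term, which proves \eqref{eq:universal-Hermitian-conjugation}.

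The only step requiring care is the ordering bookkeeping. The ordered product $YX$ inside the anticommutator must become $X^\dagger Y^\dagger$, which is exactly $Y'X'$ after the swap $X\leftrightarrow Y^\dagger$. This is why the arguments are exchanged, and not merely daggered, in the statement. I expect no real obstacle beyond checking this order.
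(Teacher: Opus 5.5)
Your proposal is correct and follows essentially the same route as the paper: you take the adjoint of the definition term by term, using product reversal and the reality of the coefficients, and observe that $(YX)^\dagger=X^\dagger Y^\dagger$ is exactly $Y'X'$ for $(X',Y')=(Y^\dagger,X^\dagger)$. The paper does this in one line via $\{YX,Z\}^\dagger=\{(YX)^\dagger,Z^\dagger\}$, whereas you expand the anticommutator first; the argument is otherwise identical.
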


\begin{proof}
From the definition
\[
    \mathcal D(X,Y)Z
    =
    XZY
    -
    \frac12\{YX,Z\},
\]
we obtain
\begin{align*}
    \left[
    \mathcal D(X,Y)Z
    \right]^\dagger
    &=
    Y^\dagger Z^\dagger X^\dagger
    -
    \frac12
    \left\{
    (YX)^\dagger,Z^\dagger
    \right\}
    \\
    &=
    Y^\dagger Z^\dagger X^\dagger
    -
    \frac12
    \left\{
    X^\dagger Y^\dagger,Z^\dagger
    \right\}
    \\
    &=
    \mathcal D(Y^\dagger,X^\dagger)Z^\dagger.
\end{align*}
\end{proof}

For a Hermitian Hilbert-Schmidt basis, \(h_\mu^\dagger=h_\mu\), equation \eqref{eq:universal-Hermitian-conjugation} implies
\begin{equation}\label{eq:elementary-Hermitian-conjugation}
    \left[
    \mathcal D_{\mu\nu}(Z)
    \right]^\dagger
    =
    \mathcal D_{\nu\mu}(Z^\dagger).
\end{equation}
In particular, since \(\mathcal D_{\mu\nu}(h_\sigma) = \mathcal D_{\mu\nu\sigma}{}^\delta h_\delta\), we obtain the component identity
\begin{equation}\label{eq:D-component-conjugation}
    \overline{
    \mathcal D_{\mu\nu\sigma}{}^\delta
    }
    =
    \mathcal D_{\nu\mu\sigma}{}^\delta.
\end{equation}
Thus, exchanging the first two operator space indices is equivalent to complex conjugation of the universal dissipative tensor.

\subsection{Hermiticity preservation of the physical dissipator}

The physical elementary blocks are obtained from the universal map \eqref{eq:physical-D-operator}. Equation \eqref{eq:universal-Hermitian-conjugation} therefore gives
\begin{equation}\label{eq:physical-block-Hermitian-conjugation}
    \left[
    \mathfrak{D}_{\alpha\beta}(Z)
    \right]^\dagger
    =
    \mathfrak{D}_{\beta\alpha}(Z^\dagger).
\end{equation}

A single off-diagonal block \(\mathfrak{D}_{\alpha\beta}\) with \(\alpha \neq \beta\) does not, by itself, have to be Hermitian. Hermiticity is only restored after contracting with the Kossakowski matrix.

\begin{proposition}[Hermiticity preservation]
Let
\[
    \mathcal L_{\mathrm{diss}}(Z)
    :=
    \gamma^{\alpha\beta}
    \mathfrak{D}_{\alpha\beta}(Z),
\]
where \(\gamma^{\alpha\beta} = \overline{\gamma^{\beta\alpha}}\). Then
\begin{equation}\label{eq:full-dissipator-Hermitian-conjugation}
\left[
\mathcal L_{\mathrm{diss}}(Z)
\right]^\dagger
=
\mathcal L_{\mathrm{diss}}(Z^\dagger).
\end{equation}
Consequently, the physical dissipator maps Hermitian operators into Hermitian operators.
\end{proposition}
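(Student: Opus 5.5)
The plan is to take the adjoint of the full contraction, push the conjugation inside each block, and then relabel the Lindblad indices. Because the dissipator is a finite sum, the adjoint distributes over it, and each scalar coefficient is conjugated:
\[
\bigl[\mathcal L_{\mathrm{diss}}(Z)\bigr]^\dagger=\sum_{\alpha,\beta}\overline{\gamma^{\alpha\beta}}\,\bigl[\mathfrak D_{\alpha\beta}(Z)\bigr]^\dagger .
\]

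Next we use the block conjugation rule \eqref{eq:physical-block-Hermitian-conjugation}, $[\mathfrak D_{\alpha\beta}(Z)]^\dagger=\mathfrak D_{\beta\alpha}(Z^\dagger)$. This rule follows from the universal identity \eqref{eq:universal-Hermitian-conjugation} applied with $X=L_\alpha$ and $Y=L_\beta^\dagger$, since then $Y^\dagger=L_\beta$ and $X^\dagger=L_\alpha^\dagger$. Substituting it into the sum gives
\[
\bigl[\mathcal L_{\mathrm{diss}}(Z)\bigr]^\dagger=\sum_{\alpha,\beta}\overline{\gamma^{\alpha\beta}}\,\mathfrak D_{\beta\alpha}(Z^\dagger).
\]

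We then invoke Hermiticity of the Kossakowski matrix, $\overline{\gamma^{\alpha\beta}}=\gamma^{\beta\alpha}$, to rewrite the coefficient. Renaming the dummy indices $\alpha\leftrightarrow\beta$ turns the sum into $\gamma^{\alpha\beta}\mathfrak D_{\alpha\beta}(Z^\dagger)=\mathcal L_{\mathrm{diss}}(Z^\dagger)$, which is \eqref{eq:full-dissipator-Hermitian-conjugation}.

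The consequence follows immediately. If $Z=Z^\dagger$, then $[\mathcal L_{\mathrm{diss}}(Z)]^\dagger=\mathcal L_{\mathrm{diss}}(Z^\dagger)=\mathcal L_{\mathrm{diss}}(Z)$, so $\mathcal L_{\mathrm{diss}}(Z)$ is Hermitian.

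There is no real obstacle in this argument. The only step needing care is the bookkeeping of the index swap. The conjugated coefficient $\overline{\gamma^{\alpha\beta}}$ must be paired with the swapped block $\mathfrak D_{\beta\alpha}$, and the order of the adjoint inside $\mathcal D(Y^\dagger,X^\dagger)$ must be tracked so that the result is $\mathfrak D_{\beta\alpha}$ rather than $\mathfrak D_{\alpha\beta}$. For this we use the fact that $(L_\beta^\dagger)^\dagger=L_\beta$.
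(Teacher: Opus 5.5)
Your proof is correct and matches the paper's argument step for step: you conjugate the coefficients, apply the block rule $[\mathfrak{D}_{\alpha\beta}(Z)]^\dagger=\mathfrak{D}_{\beta\alpha}(Z^\dagger)$, and then use $\overline{\gamma^{\alpha\beta}}=\gamma^{\beta\alpha}$ together with a relabeling of the dummy indices. The only difference is cosmetic: you apply Hermiticity of the Kossakowski matrix before relabeling, whereas the paper relabels first.
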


\begin{proof}
Using \eqref{eq:physical-block-Hermitian-conjugation},
\begin{align*}
    \left[
    \mathcal L_{\mathrm{diss}}(Z)
    \right]^\dagger
    &=
    \overline{\gamma^{\alpha\beta}}
    \left[
    \mathfrak{D}_{\alpha\beta}(Z)
    \right]^\dagger
    \\
    &=
    \overline{\gamma^{\alpha\beta}}
    \mathfrak{D}_{\beta\alpha}(Z^\dagger)
    \\
    &=
    \overline{\gamma^{\beta\alpha}}
    \mathfrak{D}_{\alpha\beta}(Z^\dagger)
    \\
    &=
    \gamma^{\alpha\beta}
    \mathfrak{D}_{\alpha\beta}(Z^\dagger)
    \\
    &=
    \mathcal L_{\mathrm{diss}}(Z^\dagger),
\end{align*}
where the indices \(\alpha\) and \(\beta\) were interchanged in the third line and the Hermiticity of the Kossakowski matrix was used in the fourth.
\end{proof}

\subsection{Reality of the component representation}

Define the dissipative component matrix by
\begin{equation}\label{eq:physical-dissipative-matrix}
    K_\sigma{}^\delta
    :=
    \gamma^{\alpha\beta}
    \mathfrak{D}_{\alpha\beta\,\sigma}{}^\delta.
\end{equation}
From \eqref{eq:D-component-conjugation} and the definition of the physical kernel \eqref{eq:physical-D-from-universal}, one finds
\(
\overline{
\mathfrak{D}_{\alpha\beta\,\sigma}{}^\delta
}
=
\mathfrak{D}_{\beta\alpha\,\sigma}{}^\delta.
\)
It follows that
\begin{align*}
\overline{K_\sigma{}^\delta}
&=
\overline{\gamma^{\alpha\beta}}\,
\overline{
\mathfrak{D}_{\alpha\beta\,\sigma}{}^\delta
}
\\
&=
\overline{\gamma^{\alpha\beta}}\,
\mathfrak{D}_{\beta\alpha\,\sigma}{}^\delta
\\
&=
\overline{\gamma^{\beta\alpha}}\,
\mathfrak{D}_{\alpha\beta\,\sigma}{}^\delta
\\
&=
\gamma^{\alpha\beta}
\mathfrak{D}_{\alpha\beta\,\sigma}{}^\delta
\\
&=
K_\sigma{}^\delta.
\end{align*}
Therefore,
\begin{equation}\label{eq:physical-dissipative-matrix-real}
K_\sigma{}^\delta
\in\mathbb R
\end{equation}
in every Hermitian Hilbert--Schmidt basis.

The Hamiltonian contribution is real in the same basis. Indeed, the coefficients \(H^\mu\) of a Hermitian Hamiltonian are real, whereas \(C_{\mu\sigma}{}^\delta\) is purely imaginary because it represents the commutator of Hermitian operators. Hence,
\[
    -\frac{\mathrm{i}}{\hbar}
    H^\mu C_{\mu\sigma}{}^\delta
    \in\mathbb R.
\]
Consequently, the complete GKLS component matrix
\begin{equation}\label{eq:full-real-GKLS-matrix}
    \mathbb L_\sigma{}^\delta
    :=
    -\frac{\mathrm{i}}{\hbar}
    H^\mu C_{\mu\sigma}{}^\delta
    +
    K_\sigma{}^\delta
\end{equation}
is real. Thus, although the density operator and the individual superoperator blocks may involve complex coefficients, the evolution of the components of a Hermitian density operator is governed by a real linear system.

\subsection{Hilbert--Schmidt adjoint}

The Hermitian conjugation relation \eqref{eq:universal-Hermitian-conjugation} must be distinguished from the adjoint of a superoperator with respect to the Hilbert-Schmidt inner product. For a linear map \(\mathcal S\), its Hilbert-Schmidt adjoint \(\mathcal S^\ddagger\) is defined by
\begin{equation}\label{eq:HS-superoperator-adjoint-def}
    \left\langle
    A,\mathcal S(B)
    \right\rangle_{\mathrm{HS}}
    =
    \left\langle
    \mathcal S^\ddagger(A),B
    \right\rangle_{\mathrm{HS}},
\end{equation}
where
\( \langle A,B\rangle_{\mathrm{HS}} = \operatorname{Tr}(A^\dagger B) \).

\begin{proposition}[Hilbert-Schmidt adjoint of the universal map]
For fixed \(X,Y\in\mathcal B(\mathcal H)\),
\begin{equation}\label{eq:HS-adjoint-universal-D}
    \mathcal D(X,Y)^\ddagger(A)
    =
    X^\dagger A Y^\dagger
    -
    \frac12
    \left\{
    X^\dagger Y^\dagger,A
    \right\}.
\end{equation}
\end{proposition}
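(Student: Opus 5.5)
The plan is to compute $\langle A,\mathcal D(X,Y)B\rangle_{\mathrm{HS}}$ directly from the definition \eqref{eq:universal-D-map}. Then I will use cyclicity of the trace to move every factor acting on $B$ over to $A$. Finally I will read off the adjoint by uniqueness: the HS inner product is nondegenerate, since it is positive definite.

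First I split the map into the bimodule part $B\mapsto XBY$ and the two one-sided parts $B\mapsto YXB$ and $B\mapsto BYX$. By linearity of the adjoint, it suffices to find the HS adjoint of each part separately.

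For the bimodule part I compute
\[
\operatorname{Tr}(A^\dagger XBY)=\operatorname{Tr}(YA^\dagger XB)=\operatorname{Tr}\bigl((X^\dagger A Y^\dagger)^\dagger B\bigr),
\]
so its adjoint is $A\mapsto X^\dagger A Y^\dagger$.

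For left multiplication, $\operatorname{Tr}(A^\dagger YXB)=\operatorname{Tr}\bigl(((YX)^\dagger A)^\dagger B\bigr)$ gives the adjoint $A\mapsto (YX)^\dagger A=X^\dagger Y^\dagger A$.

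For right multiplication, $\operatorname{Tr}(A^\dagger BYX)=\operatorname{Tr}(YXA^\dagger B)=\operatorname{Tr}\bigl((A(YX)^\dagger)^\dagger B\bigr)$ gives the adjoint $A\mapsto A X^\dagger Y^\dagger$.

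Combining the three pieces with their coefficients $1$, $-\tfrac12$, $-\tfrac12$, which are real so no conjugation issue arises, yields
\[
\mathcal D(X,Y)^\ddagger(A)=X^\dagger A Y^\dagger-\tfrac12\{X^\dagger Y^\dagger,A\}.
\]
This holds for all $A,B$, and uniqueness of the adjoint then establishes \eqref{eq:HS-adjoint-universal-D}.

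There is no real obstacle here. The only point requiring care is the bookkeeping of daggers and ordering: $(YX)^\dagger=X^\dagger Y^\dagger$, and the bimodule adjoint places $X^\dagger$ on the left and $Y^\dagger$ on the right. I would double-check this with the physical substitution $X=L_\alpha$, $Y=L_\beta^\dagger$. It should reproduce the standard Heisenberg-picture dissipator $L_\alpha^\dagger A L_\beta-\tfrac12\{L_\alpha^\dagger L_\beta,A\}$, and it does.
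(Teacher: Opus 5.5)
Your proposal is correct and takes essentially the same route as the paper. Like the paper, you split $\mathcal D(X,Y)$ into the bimodule map $B\mapsto XBY$ and left and right multiplication by $YX$, compute each Hilbert--Schmidt adjoint via cyclicity of the trace, and recombine; you also write out the one-sided adjoints $A\mapsto X^\dagger Y^\dagger A$ and $A\mapsto AX^\dagger Y^\dagger$, which the paper only asserts with ``similarly,'' and you correctly note that the real coefficients $1,-\tfrac12,-\tfrac12$ pass through the conjugate-linear adjoint unchanged.
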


\begin{proof}
For the first contribution,
\begin{align*}
    \left\langle
    A,XBY
    \right\rangle_{\mathrm{HS}}
    &=
    \operatorname{Tr}
    \left(
    A^\dagger XBY
    \right)
    \\
    &=
    \operatorname{Tr}
    \left(
    YA^\dagger XB
    \right)
    \\
    &=
    \operatorname{Tr}
    \left[
    \left(
    X^\dagger A Y^\dagger
    \right)^\dagger
    B
    \right].
\end{align*}
Therefore, the Hilbert-Schmidt adjoint of \(B\mapsto XBY\) is \(A\mapsto X^\dagger A Y^\dagger\). Similarly,
\[
    L_{YX}^\ddagger
    =
    L_{X^\dagger Y^\dagger},
    \qquad
    R_{YX}^\ddagger
    =
    R_{X^\dagger Y^\dagger},
\]
which gives \eqref{eq:HS-adjoint-universal-D}.
\end{proof}

For the physical elementary blocks, this result becomes
\begin{equation}\label{eq:HS-adjoint-physical-D}
    \mathfrak{D}_{\alpha\beta}^\ddagger(A)
    =
    L_\alpha^\dagger A L_\beta
    -
    \frac12
    \left\{
    L_\alpha^\dagger L_\beta,A
    \right\}.
\end{equation}
Evaluating the dual map on the identity yields
\begin{align}
    \mathfrak{D}_{\alpha\beta}^\ddagger(\mathbb I)
    &=
    L_\alpha^\dagger L_\beta
    -
    \frac12
    \left\{
    L_\alpha^\dagger L_\beta,\mathbb I
    \right\}
    \nonumber\\
    &=
    0.
\label{eq:dual-annihilates-identity}
\end{align}
Equation \eqref{eq:dual-annihilates-identity} represents the Heisenberg-picture analogue of the universal trace condition: preserving the trace in the Schrödinger picture is equivalent to the dual dynamics leaving the identity operator unchanged.

\begin{remark}
In general,
\[
    \mathcal D(X,Y)^\ddagger
    \neq
    \mathcal D(Y^\dagger,X^\dagger).
\]
The first expression is the Hilbert-Schmidt adjoint of a superoperator, whereas the second appears in the Hermitian conjugation of its output, as shown in \eqref{eq:universal-Hermitian-conjugation}.
\end{remark}

\section{Qubit realization of the universal dissipative tensor}\label{Sec:QubitRealization}

We next demonstrate the universal construction in the simplest nontrivial operator space. In addition to giving an explicit form for the tensors \(B_{\mu\nu}{}^\lambda\), \(C_{\mu\nu}{}^\lambda\), and \(\mathcal D_{\mu\nu\sigma}{}^\delta\), the qubit case clearly exhibits the trace, Hermiticity, and reality properties derived above.

\subsection{Lie-Jordan structure of the qubit operator space}

We choose the normalized Hermitian basis
\begin{equation}\label{eq:qubit-HS-basis}
    h_0
    =
    \frac{\mathbb I}{\sqrt2},
    \qquad
    h_a
    =
    \frac{\sigma_a}{\sqrt2},
    \qquad
    a=1,2,3,
\end{equation}
where \(\sigma_a\) are the Pauli matrices. This basis satisfies \(\operatorname{Tr}(h_\mu h_\nu) = \delta_{\mu\nu}\), using \(\sigma_a\sigma_b = \delta_{ab}\mathbb I + \mathrm{i}\epsilon_{ab}{}^c\sigma_c\), the ordered products become
\begin{subequations}\label{eq:qubit-ordered-products}
\begin{align}
    h_0h_0
    &=
    \frac1{\sqrt2}h_0,
    \\
    h_0h_a
    =
    h_ah_0
    &=
    \frac1{\sqrt2}h_a,
    \\
    h_ah_b
    &=
    \frac{\delta_{ab}}{\sqrt2}h_0
    +
    \frac{\mathrm{i}}{\sqrt2}
    \epsilon_{ab}{}^c h_c.
\end{align}
\end{subequations}

The nonvanishing Jordan structure constants are therefore
\begin{subequations}\label{eq:qubit-B-constants}
\begin{align}
    B_{00}{}^0
    &=
    \sqrt2,
    \\
    B_{0a}{}^b
    =
    B_{a0}{}^b
    &=
    \sqrt2\,\delta_a{}^b,
    \\
    B_{ab}{}^0
    &=
    \sqrt2\,\delta_{ab},
\end{align}
\end{subequations}
whereas the nonvanishing Lie structure constants are
\begin{equation}\label{eq:qubit-C-constants}
    C_{ab}{}^c
    =
    \mathrm{i}\sqrt2\,
    \epsilon_{ab}{}^c.
\end{equation}

\subsection{Universal elementary dissipative blocks}

\begin{proposition}[Universal qubit blocks]
For the basis \eqref{eq:qubit-HS-basis}, the universal elementary maps satisfy
\begin{subequations}\label{eq:qubit-universal-D-blocks}
\begin{align}
    \mathcal D_{00}(Z)
    &=
    0,
    \\
    \mathcal D_{a0}(Z)
    &=
    \frac1{2\sqrt2}
    [h_a,Z],
    \\
    \mathcal D_{0a}(Z)
    &=
    -\frac1{2\sqrt2}
    [h_a,Z].
\end{align}
\end{subequations}
For \(a,b,c=1,2,3\), their action on the basis elements is
\begin{subequations}\label{eq:qubit-Dab-action}
\begin{align}
    \mathcal D_{ab}(h_0)
    &=
    \mathrm{i}\epsilon_{ab}{}^k h_k,
    \label{eq:qubit-Dab-h0}
    \\
    \mathcal D_{ab}(h_c)
    &=
    \frac12
    \left(
    \delta_{ac}h_b
    +
    \delta_{bc}h_a
    \right)
    -
    \delta_{ab}h_c.
    \label{eq:qubit-Dab-hc}
\end{align}
\end{subequations}
\end{proposition}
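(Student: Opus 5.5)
The plan is a direct evaluation of the definition \eqref{eq:elementary-D-munu}, $\mathcal D_{\mu\nu}(Z)=h_\mu Z h_\nu-\frac12\{h_\nu h_\mu,Z\}$. I will use the explicit basis \eqref{eq:qubit-HS-basis} and the ordered products \eqref{eq:qubit-ordered-products}; no tensor contractions of $B$ and $C$ are needed. I treat the blocks containing the index $0$ first, because there $h_0=\mathbb I/\sqrt2$ is central.

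For $\mathcal D_{00}$: $h_0Zh_0=Z/2$ and $h_0h_0=\mathbb I/2$, so $\frac12\{\mathbb I/2,Z\}=Z/2$ and the two terms cancel. For $\mathcal D_{a0}$: $h_aZh_0=\frac1{\sqrt2}h_aZ$ and $h_0h_a=\frac1{\sqrt2}h_a$, so
\[
\mathcal D_{a0}(Z)=\frac1{\sqrt2}\Bigl(h_aZ-\tfrac12 h_aZ-\tfrac12 Zh_a\Bigr)=\frac1{2\sqrt2}[h_a,Z].
\]
The same manipulation with the central factor now on the left, $h_0Zh_a=\frac1{\sqrt2}Zh_a$, gives $\mathcal D_{0a}(Z)=-\frac1{2\sqrt2}[h_a,Z]$. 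As a check, this agrees with \eqref{eq:elementary-Hermitian-conjugation}: $[\mathcal D_{a0}(Z)]^\dagger=\mathcal D_{0a}(Z^\dagger)$.

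For the traceless indices I will pass to Pauli matrices, $h_a=\sigma_a/\sqrt2$.
\begin{itemize}
\item On $h_0$: $\mathcal D_{ab}(h_0)=\frac1{\sqrt2}\bigl(h_ah_b-h_bh_a\bigr)=\frac1{\sqrt2}[h_a,h_b]$. Since $[h_a,h_b]=C_{ab}{}^k h_k=\mathrm i\sqrt2\,\epsilon_{ab}{}^k h_k$ by \eqref{eq:qubit-C-constants}, this is $\mathrm i\epsilon_{ab}{}^k h_k$.
\item On $h_c$: I write $\mathcal D_{ab}(h_c)=\frac1{2\sqrt2}\bigl(\sigma_a\sigma_c\sigma_b-\frac12\{\sigma_b\sigma_a,\sigma_c\}\bigr)$ and use the triple-product identity
\[
\sigma_a\sigma_c\sigma_b=\delta_{ac}\sigma_b+\delta_{cb}\sigma_a-\delta_{ab}\sigma_c+\mathrm i\epsilon_{acb}\mathbb I .
\]
I obtain it by applying $\sigma_a\sigma_b=\delta_{ab}\mathbb I+\mathrm i\epsilon_{ab}{}^c\sigma_c$ twice together with the contraction $\epsilon_{ack}\epsilon_{kbl}=\delta_{ab}\delta_{cl}-\delta_{al}\delta_{cb}$. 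For the anticommutator term, $\sigma_b\sigma_a=\delta_{ab}\mathbb I+\mathrm i\epsilon_{bak}\sigma_k$ and $\{\sigma_k,\sigma_c\}=2\delta_{kc}\mathbb I$ give $\frac12\{\sigma_b\sigma_a,\sigma_c\}=\delta_{ab}\sigma_c+\mathrm i\epsilon_{bac}\mathbb I$.
\end{itemize}

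The main obstacle is the sign bookkeeping in the triple product, specifically verifying that the identity components cancel. This requires $\epsilon_{acb}=\epsilon_{bac}$, which holds because $(a,c,b)$ is a cyclic permutation of $(b,a,c)$. Once that cancellation is secured, the difference is $\delta_{ac}\sigma_b+\delta_{bc}\sigma_a-2\delta_{ab}\sigma_c$. Multiplying by $\frac1{2\sqrt2}$ and replacing $\sigma=\sqrt2\,h$ yields \eqref{eq:qubit-Dab-hc}.

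As final consistency checks, I will confirm that no $h_0$ component appears in any output, as the trace constraint \eqref{eq:universal-trace-constraint} requires. I will also confirm that exchanging $a\leftrightarrow b$ complex-conjugates the coefficients, as in \eqref{eq:D-component-conjugation}: the first block flips sign with $\epsilon$ and carries the factor $\mathrm i$, while the second block is symmetric and real.
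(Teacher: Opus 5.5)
Your proposal is correct and takes the same route as the paper: you evaluate $\mathcal D_{\mu\nu}(Z)=h_\mu Zh_\nu-\frac12\{h_\nu h_\mu,Z\}$ directly, using that $h_0$ is central together with the ordered qubit products. Your Pauli triple-product computation for $\mathcal D_{ab}(h_c)$, including the cancellation of the identity components via $\epsilon_{acb}=\epsilon_{bac}$, spells out the step that the paper compresses into ``repeated use'' of the ordered-product relations.
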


\begin{proof}
Since \(h_0=\mathbb I/\sqrt2\),
\[
\mathcal D_{00}(Z)
=
\frac12Z
-
\frac12
\left\{
\frac{\mathbb I}{2},Z
\right\}
=
0.
\]
Likewise,
\begin{align*}
\mathcal D_{a0}(Z)
&=
\frac1{\sqrt2}h_aZ
-
\frac1{2\sqrt2}
\{h_a,Z\}
\\
&=
\frac1{2\sqrt2}
[h_a,Z],
\end{align*}
and the expression for \(\mathcal D_{0a}\) follows analogously.

For the remaining blocks,
\begin{align*}
\mathcal D_{ab}(h_0)
&=
h_ah_0h_b
-
\frac12\{h_bh_a,h_0\}
\\
&=
\frac1{\sqrt2}
(h_ah_b-h_bh_a)
\\
&=
\frac1{\sqrt2}[h_a,h_b]
\\
&=
\mathrm{i}\epsilon_{ab}{}^k h_k.
\end{align*}
Finally, repeated use of
Eq.~\eqref{eq:qubit-ordered-products} gives
\[
\mathcal D_{ab}(h_c)
=
\frac12
\left(
\delta_{ac}h_b
+
\delta_{bc}h_a
\right)
-
\delta_{ab}h_c.
\]
\end{proof}

Equations \eqref{eq:qubit-Dab-h0} and \eqref{eq:qubit-Dab-hc} specify the full universal dissipative tensor for a qubit. In particular, the fact that no \(h_0\) component appears in any output explicitly confirms that \(\mathcal D_{\mu\nu\sigma}{}^0 = 0\). Furthermore,
\(
\left[
\mathcal D_{ab}(h_\sigma)
\right]^\dagger
=
\mathcal D_{ba}(h_\sigma)
\), in agreement with  \eqref{eq:elementary-Hermitian-conjugation}.

\subsection{Pure dephasing}

Consider a single Hermitian Lindblad operator
\begin{equation}\label{eq:qubit-dephasing-L}
    L
    =
    \sqrt{\kappa}\,h_3
    =
    \sqrt{\frac{\kappa}{2}}\,\sigma_3.
\end{equation}
The corresponding dissipator is
\[
    \mathcal L_{\mathrm{deph}}(Z)
    =
    \kappa\mathcal D_{33}(Z).
\]
Using Eq.~\eqref{eq:qubit-Dab-hc}, we obtain
\begin{subequations}\label{eq:dephasing-basis-action}
\begin{align}
    \mathcal L_{\mathrm{deph}}(h_0)
    &=
    0,
    \\
    \mathcal L_{\mathrm{deph}}(h_1)
    &=
    -\kappa h_1,
    \\
    \mathcal L_{\mathrm{deph}}(h_2)
    &=
    -\kappa h_2,
    \\
    \mathcal L_{\mathrm{deph}}(h_3)
    &=
    0.
\end{align}
\end{subequations}
Therefore, in the ordered basis
\((h_0,h_1,h_2,h_3)\),
\begin{equation}\label{eq:dephasing-matrix}
    \left[
    \mathcal L_{\mathrm{deph}}
    \right]
    =
    \begin{pmatrix}
    0 & 0       & 0       & 0\\
    0 & -\kappa & 0       & 0\\
    0 & 0       & -\kappa & 0\\
    0 & 0       & 0       & 0
    \end{pmatrix}.
\end{equation}

Writing the density matrix as
\begin{equation}\label{eq:qubit-Bloch-expansion}
    \rho
    =
    \frac12
    \left(
    \mathbb I+r_a\sigma_a
    \right)
    =
    \frac1{\sqrt2}
    \left(
    h_0+r_ah_a
    \right),
\end{equation}
the Bloch-vector equations are
\begin{equation}\label{eq:dephasing-Bloch-equations}
    \dot r_1
    =
    -\kappa r_1,
    \qquad
    \dot r_2
    =
    -\kappa r_2,
    \qquad
    \dot r_3
    =
    0.
\end{equation}
The population component is unchanged, whereas the off-diagonal coherences decay exponentially. The first row of \eqref{eq:dephasing-matrix} is zero, reflecting trace preservation, and its first column is zero because the channel is unital.

\subsection{Amplitude damping}

We now consider the Lindblad operator
\begin{equation}\label{eq:amplitude-damping-L}
    L
    =
    \sqrt{\gamma}\,
    \lvert0\rangle\langle1\rvert
    =
    \sqrt{\frac{\gamma}{2}}
    \left(
    h_1+\mathrm{i} h_2
    \right),
\end{equation}
where \( \sigma_3 = \lvert0\rangle\langle0\rvert - \lvert1\rangle\langle1\rvert \). The only nonvanishing expansion coefficients are
\[
    \ell^1
    =
    \sqrt{\frac{\gamma}{2}},
    \qquad
    \ell^2
    =
    \mathrm{i}\sqrt{\frac{\gamma}{2}}.
\]
Consequently,
\begin{equation}\label{eq:amplitude-damping-D-decomposition}
    \mathcal L_{\mathrm{AD}}
    =
    \frac{\gamma}{2}
    \left(
    \mathcal D_{11}
    +
    \mathcal D_{22}
    -
    \mathrm{i}\mathcal D_{12}
    +
    \mathrm{i}\mathcal D_{21}
    \right).
\end{equation}

Using equations \eqref{eq:qubit-Dab-h0} and \eqref{eq:qubit-Dab-hc}, one finds \begin{subequations}\label{eq:amplitude-damping-basis-action}
\begin{align}
    \mathcal L_{\mathrm{AD}}(h_0)
    &=
    \gamma h_3,
    \\
    \mathcal L_{\mathrm{AD}}(h_1)
    &=
    -\frac{\gamma}{2}h_1,
    \\
    \mathcal L_{\mathrm{AD}}(h_2)
    &=
    -\frac{\gamma}{2}h_2,
    \\
    \mathcal L_{\mathrm{AD}}(h_3)
    &=
    -\gamma h_3.
\end{align}
\end{subequations}
Thus, in the ordered basis \((h_0,h_1,h_2,h_3)\),
\begin{equation}\label{eq:amplitude-damping-matrix}
    \left[
    \mathcal L_{\mathrm{AD}}
    \right]
    =
    \begin{pmatrix}
    0      & 0             & 0             & 0\\
    0      & -\gamma/2     & 0             & 0\\
    0      & 0             & -\gamma/2     & 0\\
    \gamma & 0             & 0             & -\gamma
    \end{pmatrix}.
\end{equation}
The first row again vanishes, indicating that the trace is preserved. However, unlike the dephasing channel, the first column is now nonzero, with \(\mathcal L_{\mathrm{AD}}(h_0) = \gamma h_3\). Consequently, the amplitude-damping dynamics are not unital.

Using \eqref{eq:qubit-Bloch-expansion}, the corresponding Bloch vector equations are
\begin{equation}\label{eq:amplitude-damping-Bloch-equations}
    \dot r_1
    =
    -\frac{\gamma}{2}r_1,
    \qquad
    \dot r_2
    =
    -\frac{\gamma}{2}r_2,
    \qquad
    \dot r_3
    =
    \gamma(1-r_3).
\end{equation}

The affine shift of the Bloch vector originates from how the linear superoperator acts on the fixed identity component \(\rho^0 = 1/\sqrt2\). Consequently, the seemingly affine dynamics of the three-dimensional Bloch vector are actually just the restriction of a linear evolution defined on the full four-dimensional operator space.

Both channel matrices, \eqref{eq:dephasing-matrix} and \eqref{eq:amplitude-damping-matrix}, are real, consistent with the general statement in \eqref{eq:physical-dissipative-matrix-real}.

\section{Acknowledgements}\label{sec:Acknowledgements}
GLA acknowledges the financial support of SECIHTI through
a Master's scholarship (CVU No.~2051081).
LB acknowledges the financial support of SECIHTI through
a PhD scholarship (CVU No.~960690).
AKB, VGIS, JCSS, and JLC acknowledge the financial support
of DCB UAM-A through grants 22322035 and 22322036.
VACB acknowledges the financial support of DCB UAM-A.
\section{Conclusions}
The main result of this work is that the dissipative sector of
the GKLS equation possesses a universal algebraic structure.
Specifically, we have shown that every Lindblad dissipator can be
obtained from a single universal trilinear map whose components
are determined entirely by the Lie and Jordan structure tensors
of the operator basis. The dependence on the particular physical
system enters only through the coefficients of the Lindblad
operators and the Kossakowski matrix.
The resulting construction shows that the dissipative part
of the GKLS equation is governed by a universal trilinear map,
\[
\mathcal D(X,Y)Z
=
XZY-\frac12\{YX,Z\},
\]
whose operator-space components are determined entirely by the structure tensors \(B_{\mu\nu}{}^\lambda\) and \(C_{\mu\nu}{}^\lambda\). The dependence on a particular physical model enters only through the coefficients defining the Lindblad operators and through the Kossakowski matrix. This provides a clear separation between the universal operator algebra and the parameters specifying a given dissipative process.

A central result is the identification of the combinations \(B+C\) and \(B-C\) with the left and right regular actions of the associative operator algebra. The elementary dissipative map can therefore be understood as a left-right action corrected by an ordered Jordan term. In this language, the commutativity of the left and right transports follows directly from associativity, clarifying the origin of the corresponding superoperator identity found in Ref. \cite{Bixano:2026qce}. The affine interpretation introduced here gives a geometric meaning to these algebraic operations without requiring any additional structure on the space of density operators.

We have also established the main consistency properties of the universal dissipative map. Its image is contained in the traceless operator subspace, which guarantees trace preservation independently of the chosen basis. Its Hermitian-conjugation rule explains why individual off-diagonal dissipative blocks need not preserve Hermiticity, whereas their contraction with a Hermitian Kossakowski matrix does. In a Hermitian Hilbert-Schmidt basis, the complete GKLS equation is consequently represented by a real system of equations for the operator components. The Hilbert-Schmidt adjoint further provides the corresponding dual description and makes explicit the relation between trace preservation in the Schrödinger picture and the invariance of the identity operator in the Heisenberg picture.

The qubit realization demonstrates that the formalism reproduces the standard pure-dephasing and amplitude-damping dynamics while retaining a direct interpretation in terms of the universal tensor. In particular, the examples distinguish naturally between unital and non-unital evolution and show how the affine translation of the Bloch vector arises from a linear transformation on the full operator space.

The present formulation therefore supplies a structural interpretation of the universal superoperator framework developed in \cite{Bixano:2026qce}. It is particularly suited to extensions in which the operator-space symmetries, invariant subspaces, or dimensional reductions of GKLS generators are relevant. Applications to higher-dimensional systems and to symmetry-adapted dissipative models constitute natural directions for further investigation.

\section{Acknowledgements}\label{sec:Acknowledgements}

LB acknowledges the financial support of SECIHTI through
a PhD scholarship (CVU No.~960690).
GLA acknowledges the financial support of SECIHTI through
a Master's scholarship (CVU No.~2051081).
AKB, VGIS, JCSS, and JLC acknowledge the financial support
of DCB UAM-A through grants 22322035 and 22322036.
VACB acknowledges the financial support of DCB UAM-A.

\appendix

\section{Alternative superoperator-block derivation of the first dissipative term}
\label{App:SuperoperatorBlocks}

The compact derivation of eq \eqref{eq:LrhoL-BC-direct-expanded} follows directly from associativity and the ordered products \(B\pm C\). In this appendix, we provide an independent derivation of the same result using the Lie and Jordan superoperators introduced in Sec.~\ref{SubSec:Super-operators}. This calculation makes explicit how the pure Lie, pure Jordan, and mixed Lie-Jordan blocks combine to reproduce the ordered-product expression.

Using
\(
XY
=
\frac12
\left(
\acom{X}{Y}
+
\comm{X}{Y}
\right),
\quad
YX
=
\frac12
\left(
\acom{X}{Y}
-
\comm{X}{Y}
\right),
\)
with \(X=\rho\) and \(Y=L_\beta^\dagger\), we obtain
\[
\begin{aligned}
\rho L_\beta^\dagger
&=
\frac12
\left(
\acom{\rho}{L_\beta^\dagger}
+
\comm{\rho}{L_\beta^\dagger}
\right)
\\
&=
\frac12
\left(
\acom{L_\beta^\dagger}{\rho}
-
\comm{L_\beta^\dagger}{\rho}
\right)
\\
&=
\frac12
\left(
\widetilde e_{L_\beta^\dagger}
-
e_{L_\beta^\dagger}
\right)(\rho).
\end{aligned}
\]
Here,
\[
\begin{aligned}
e_{L_\alpha}
&=
\ell_\alpha{}^\mu e_\mu,
&
\widetilde e_{L_\alpha}
&=
\ell_\alpha{}^\mu\widetilde e_\mu,
\\
e_{L_\beta^\dagger}
&=
\overline{\ell_\beta{}^\nu}\,e_\nu,
&
\widetilde e_{L_\beta^\dagger}
&=
\overline{\ell_\beta{}^\nu}\,\widetilde e_\nu.
\end{aligned}
\]

Likewise, for every \(X\in\mathcal B(\mathcal H)\),
\[
L_\alpha X
=
\frac12
\left(
\widetilde e_{L_\alpha}
+
e_{L_\alpha}
\right)(X).
\]
Consequently,
\begin{equation}\label{eq:App-LrhoL-factorized}
L_\alpha\rho L_\beta^\dagger
=
\frac14
\left(
\widetilde e_{L_\alpha}
+
e_{L_\alpha}
\right)
\left(
\widetilde e_{L_\beta^\dagger}
-
e_{L_\beta^\dagger}
\right)(\rho).
\end{equation}

Given two superoperators \(A\) and \(B\), their composition can be expressed as
\[
AB
=
\frac12
\left(
\acom{A}{B}
+
\comm{A}{B}
\right).
\]
Expanding \eqref{eq:App-LrhoL-factorized}, we therefore obtain
\begin{align}
L_\alpha\rho L_\beta^\dagger
&=
\frac18
\Big(
\acom{\widetilde e_{L_\alpha}}
      {\widetilde e_{L_\beta^\dagger}}
-
\acom{\widetilde e_{L_\alpha}}
      {e_{L_\beta^\dagger}}
+
\acom{e_{L_\alpha}}
      {\widetilde e_{L_\beta^\dagger}}
-
\acom{e_{L_\alpha}}
      {e_{L_\beta^\dagger}}
\notag\\
&\hspace{1.3cm}
+
\comm{\widetilde e_{L_\alpha}}
      {\widetilde e_{L_\beta^\dagger}}
-
\comm{\widetilde e_{L_\alpha}}
      {e_{L_\beta^\dagger}}
+
\comm{e_{L_\alpha}}
      {\widetilde e_{L_\beta^\dagger}}
-
\comm{e_{L_\alpha}}
      {e_{L_\beta^\dagger}}
\Big)(\rho).
\label{eq:LrhoL-super-expanded}
\end{align}
This is the complete Lie-Jordan superoperator decomposition of the first term in \eqref{eq:D-alpha-beta}.

We now evaluate the eight blocks on a basis element \(h_\sigma\).

The pure Jordan anticommutator block is
\[
\begin{aligned}
\acom{\widetilde e_{L_\alpha}}
      {\widetilde e_{L_\beta^\dagger}}(h_\sigma)
&=
\ell_\alpha{}^\mu
\overline{\ell_\beta{}^\nu}
\Big(
B_{\nu\sigma}{}^pB_{\mu p}{}^\delta
+
B_{\mu\sigma}{}^pB_{\nu p}{}^\delta
\Big)h_\delta.
\end{aligned}
\]

The first mixed Jordan-Lie block is
\[
\begin{aligned}
\acom{\widetilde e_{L_\alpha}}
      {e_{L_\beta^\dagger}}(h_\sigma)
&=
\ell_\alpha{}^\mu
\overline{\ell_\beta{}^\nu}
\Big(
C_{\nu\sigma}{}^pB_{\mu p}{}^\delta
+
B_{\mu\sigma}{}^pC_{\nu p}{}^\delta
\Big)h_\delta
\\
&=
\ell_\alpha{}^\mu
\overline{\ell_\beta{}^\nu}
D_{\mu\nu\sigma}{}^\delta h_\delta,
\end{aligned}
\]
where \(D_{\mu\nu\sigma}{}^\delta\) is the mixed tensor defined in \eqref{eq:D-tensor}.

The second mixed Lie-Jordan block is
\[
\begin{aligned}
\acom{e_{L_\alpha}}
      {\widetilde e_{L_\beta^\dagger}}(h_\sigma)
&=
\ell_\alpha{}^\mu
\overline{\ell_\beta{}^\nu}
\Big(
B_{\nu\sigma}{}^pC_{\mu p}{}^\delta
+
C_{\mu\sigma}{}^pB_{\nu p}{}^\delta
\Big)h_\delta
\\
&=
\ell_\alpha{}^\mu
\overline{\ell_\beta{}^\nu}
D_{\nu\mu\sigma}{}^\delta h_\delta.
\end{aligned}
\]

The pure Lie anticommutator block is
\[
\begin{aligned}
\acom{e_{L_\alpha}}
      {e_{L_\beta^\dagger}}(h_\sigma)
&=
\ell_\alpha{}^\mu
\overline{\ell_\beta{}^\nu}
\Big(
C_{\nu\sigma}{}^pC_{\mu p}{}^\delta
+
C_{\mu\sigma}{}^pC_{\nu p}{}^\delta
\Big)h_\delta.
\end{aligned}
\]

Using the closure relations of section \ref{SubSec:Super-operators}, the four commutator blocks are
\[
\begin{aligned}
\comm{\widetilde e_{L_\alpha}}
      {\widetilde e_{L_\beta^\dagger}}(h_\sigma)
&=
\ell_\alpha{}^\mu
\overline{\ell_\beta{}^\nu}
C_{\mu\nu}{}^pC_{p\sigma}{}^\delta h_\delta,
\\
\comm{\widetilde e_{L_\alpha}}
      {e_{L_\beta^\dagger}}(h_\sigma)
&=
\ell_\alpha{}^\mu
\overline{\ell_\beta{}^\nu}
C_{\mu\nu}{}^pB_{p\sigma}{}^\delta h_\delta,
\\
\comm{e_{L_\alpha}}
      {\widetilde e_{L_\beta^\dagger}}(h_\sigma)
&=
\ell_\alpha{}^\mu
\overline{\ell_\beta{}^\nu}
C_{\mu\nu}{}^pB_{p\sigma}{}^\delta h_\delta,
\\
\comm{e_{L_\alpha}}
      {e_{L_\beta^\dagger}}(h_\sigma)
&=
\ell_\alpha{}^\mu
\overline{\ell_\beta{}^\nu}
C_{\mu\nu}{}^pC_{p\sigma}{}^\delta h_\delta.
\end{aligned}
\]

The commutator contributions in  \eqref{eq:LrhoL-super-expanded} cancel pairwise:
\[
C_{\mu\nu}{}^pC_{p\sigma}{}^\delta
-
C_{\mu\nu}{}^pB_{p\sigma}{}^\delta
+
C_{\mu\nu}{}^pB_{p\sigma}{}^\delta
-
C_{\mu\nu}{}^pC_{p\sigma}{}^\delta
=
0.
\]

The remaining contribution from the anticommutator can be expressed as a sum of two ordered products:
\begin{align}
L_\alpha\rho L_\beta^\dagger
&=
\frac18
\ell_\alpha{}^\mu
\overline{\ell_\beta{}^\nu}
\rho^\sigma
\Big[
\left(
B_{\mu\sigma}{}^p
+
C_{\mu\sigma}{}^p
\right)
\left(
B_{\nu p}{}^\delta
-
C_{\nu p}{}^\delta
\right)
\notag\\
&\hspace{3.2cm}
+
\left(
B_{\nu\sigma}{}^p
-
C_{\nu\sigma}{}^p
\right)
\left(
B_{\mu p}{}^\delta
+
C_{\mu p}{}^\delta
\right)
\Big]h_\delta.
\label{eq:App-LrhoL-two-orderings}
\end{align}

Associativity of the operator product implies
\[
(h_\mu h_\sigma)h_\nu
=
h_\mu(h_\sigma h_\nu).
\]
In terms of the Lie--Jordan structure tensors, this is precisely
\[
\left(
B_{\mu\sigma}{}^p
+
C_{\mu\sigma}{}^p
\right)
\left(
B_{\nu p}{}^\delta
-
C_{\nu p}{}^\delta
\right)
=
\left(
B_{\nu\sigma}{}^p
-
C_{\nu\sigma}{}^p
\right)
\left(
B_{\mu p}{}^\delta
+
C_{\mu p}{}^\delta
\right),
\]
in agreement with \eqref{eq:BC-associativity-identity}. Hence the two terms in \eqref{eq:App-LrhoL-two-orderings} coincide, and we obtain
\begin{equation}\label{eq:App-LrhoL-BC-expanded}
L_\alpha\rho L_\beta^\dagger
=
\frac14
\ell_\alpha{}^\mu
\overline{\ell_\beta{}^\nu}
\rho^\sigma
\left(
B_{\mu\sigma}{}^p
+
C_{\mu\sigma}{}^p
\right)
\left(
B_{\nu p}{}^\delta
-
C_{\nu p}{}^\delta
\right)h_\delta.
\end{equation}

Equation~\eqref{eq:App-LrhoL-BC-expanded} coincides with \eqref{eq:LrhoL-BC-direct-expanded}. This establishes the equivalence between the compact ordered-product derivation presented in the main text and the full Lie–Jordan superoperator block decomposition.

The remaining anticommutator term is computed in \eqref{eq:anticommutator-short}. Combining this result with \eqref{eq:App-LrhoL-BC-expanded} yields the universal tensor
\(\mathcal D_{\mu\nu\sigma}{}^\delta\) in \eqref{eq:universal-D-components}, the Lindblad-operator-resolved kernel in \eqref{eq:frakD-final}, and the full component form
\eqref{eq:full-GKLS-components}.

\bibliography{paper-two}
\end{document}